\documentclass[11pt,reqno]{amsart}
\pdfoutput=1
\usepackage[margin=1.25in]{geometry}

\usepackage{enumerate}
\usepackage{dsfont}
\usepackage{graphicx,epstopdf}
\usepackage{amsmath, amsthm, amscd, amsfonts, amssymb}
\usepackage{algorithm}
\usepackage{bm}
\usepackage{subcaption}
\usepackage[title]{appendix}
\usepackage{multirow}
\usepackage{verbatim}
\usepackage{braket}
\usepackage{xcolor}
\usepackage{appendix}
\usepackage{mathtools}  

\definecolor{mulberry}{rgb}{0.77, 0.29, 0.55}
\definecolor{lightpink}{rgb}{1.0, 0.71, 0.76}
\definecolor{yellow}{RGB}{240,228,66}
\definecolor{lightblue}{RGB}{0,58,120}
\definecolor{darkblue}{rgb}{0.1,0.1,0.7}
\definecolor{darkred}{rgb}{0.5,0.1,0.1}
\definecolor{darkgreen}{rgb}{0.0,0.42,0.06}
\definecolor{shadecolor}{rgb}{0.85,0.85,0.85}
\definecolor{navyblue}{rgb}{0.0, 0.0, 0.5}
\definecolor{cerisepink}{rgb}{0.93, 0.23, 0.51}
\definecolor{oceanboatblue}{rgb}{0.0, 0.47, 0.75}
\definecolor{red-brown}{rgb}{0.65, 0.16, 0.16}
\definecolor{carnelian}{rgb}{0.7, 0.11, 0.11}

\usepackage[
            colorlinks=true,
            citecolor=blue,
            linkcolor=red,
            filecolor=magenta,      
            urlcolor=cerisepink,
            filecolor=blue,
            breaklinks=true,
            bookmarks=true]{hyperref}
\usepackage[capitalize]{cleveref}

\usepackage{tikz}
\usetikzlibrary{shapes.geometric}
\usetikzlibrary{arrows.meta}
\usetikzlibrary{positioning}
\usetikzlibrary{quantikz2}

\usepackage{pgfplots}
\pgfplotsset{compat=1.8}

\newcommand{\R}{{\mathbb{R}}}

\newtheorem{thm}{Theorem}[section]
\newtheorem{prop}[thm]{Proposition}
\newtheorem{lem}[thm]{Lemma}
\newtheorem{cor}[thm]{Corollary}
\newtheorem{remark}[thm]{Remark}

\newtheorem{result}[thm]{Result}

\theoremstyle{definition}

\usepackage[style=numeric,url=false,sorting=none,backref=true,maxbibnames=99,doi=false]{biblatex}
\DefineBibliographyStrings{english}{%
  backrefpage = {page},
  backrefpages = {pages},
}
\addbibresource{references.bib}

\begin{document}



\title[Quantum Circuits for Korobov Functions]{Approximating Korobov Functions via Quantum Circuits} 
\author[J. Aftab]{Junaid Aftab} \address{Department of Mathematics, University of Maryland, College Park, 4176 Campus Drive, College Park, MD 20742, United States} \email{junaid@umd.edu} 

\author[H. Yang]{Haizhao Yang} \address{Department of Computer Science, University of Maryland, College Park, 4176 Campus Drive, College Park, MD 20742, United States} \email{hzyang@umd.edu}
\date{}


	\begin{abstract}
    Approximation theory provides a useful framework for assessing the expressive capacity and computational complexity of quantum circuits. In this work, we design quantum circuits that approximate $d$-dimensional functions in the Korobov function space. Our construction combines quantum signal processing (QSP) with the linear combination of unitaries (LCU) method to implement circuits that generate Chebyshev polynomials. We analyze the resulting approximation error rates and characterize the computational complexity of the circuits. Because the Korobov function space is a subspace of certain Sobolev spaces, our results establish a theoretical foundation for approximating a broad class of functions on quantum computers.
	\end{abstract}
\maketitle

\vspace{-2.0em}
	\section{Introduction}\label{intro}
	Quantum computing shows significant potential for addressing high-dimensional problems, such as partial differential equations (PDEs) \cite{childs2021high, liu2021efficient, jin2023quantum, sato2024hamiltonian}. Most existing quantum algorithms, however, are designed for fault-tolerant quantum computers, which current hardware cannot yet support. This limitation has motivated growing interest in algorithms tailored to near-term quantum devices. Among these approaches, parameterized quantum circuits (PQCs) have emerged as a promising framework \cite{Benedetti2019PQC-machinelearning}. Numerous PQC-based algorithms have since been proposed for diverse applications \cite{cerezo2021variational, tilly2022variational, alghassi2022variational, liu2021variational}.
	
	A parameterized quantum circuit can be viewed as the quantum analog of a classical feedforward neural network. In the classical setting, substantial progress has been made toward developing a rigorous theoretical understanding of neural networks. Neural network approximation theory \cite{devore2021neural} provides a systematic framework for analyzing their expressive power, with particular emphasis on deriving approximation error bounds in terms of network width, depth, and the number of neurons \cite{yang2021nnapproxsmooth, yang2023nearly, yang2022optimalapproxrelu, yang2023nearlysobolev}. This research aims to characterize the fundamental limitations of neural networks in approximating functions from different function spaces, independent of the specific learning algorithm or availability of training data.
	
	Similarly, a rigorous theoretical framework for parameterized quantum circuits is essential for assessing quantum machine learning models and their potential to advance classical machine learning tasks. Recent work has established universal approximation theorems for parameterized quantum circuits \cite{2021adrianonequbituniversalapproximant, 2021gotouniversalapproximationqml, 2022zhanpowerlimitationsqnn}, ensuring the existence of circuits capable of approximating target functions to arbitrary accuracy. Yet, the circuit complexity required to realize such approximations on quantum hardware remains largely unexplored. Addressing this gap is critical for strengthening universal approximation results by providing explicit circuit constructions and deriving bounds on their computational complexity.
	
	In this work, we extend the framework introduced by \cite{yu2024provable} to advance the emerging field of \emph{quantum neural network approximation theory}. Specifically, we investigate the use of unparameterized quantum circuits to approximate functions via Chebyshev expansions, focusing on the approximation of \( d \)-dimensional Korobov functions. We derive worst-case complexity bounds on circuit width and depth required to achieve a prescribed level of accuracy.

    The Korobov function space is well-suited for high-dimensional problems because of its compatibility with sparse grid decomposition \cite{bungartz2004sparse}, which alleviates the curse of dimensionality inherent in grid-based numerical methods \cite{bellman1959mathematical}. Sparse grids were originally introduced for numerical integration \cite{korobov1959, smolyak1960interpolation} and have since been applied to solving partial differential equations (PDEs) \cite{zenger1991sparse, balder1996solution, griebel2007sparse, rong2017nodal, chernov2019sparse} as well as to the design of neural network architectures (see \cref{comparison}).

	
	\subsection{Summary of Results}\label{sketch}  
	The Korobov function space on \( [0,1]^d \subseteq \mathbb{R}^d \) is denoted by \( X^{r,p}([0,1]^d) \), and will be formally defined later. Our first main result establishes the existence of a quantum circuit capable of approximating functions in \( X^{2,p}([0,1]^d) \):

	\begin{result}\label{result:informal-result1}
		(Informal version of \cref{prop:approximation-error})
		Let $\epsilon \in (0,1)$ and $p \in \{2,\infty\}$. For $f \in X^{2, p}([0,1]^d)$, there exists a function, $g \in X^{2, p}([0,1]^d)$, defined on $[0,1]^d$ such that
		\begin{equation*}
			\| f- g \|_{L^p([0,1]^d)} \leq \epsilon.
		\end{equation*}
		Moreover, there exists a quantum circuit, $U_{f,\varepsilon}$, that outputs $g$ such that $U_{f,\varepsilon}$ has depth at most 
		$\mathcal{O}  ( d \epsilon^{-(\frac{1}{2}+\frac{1}{d})}  (2 \log^{3/2}_2 (\frac{1}{\epsilon}) )^d  )$  and the width at most $\mathcal{O}(2d +  \epsilon^{- \frac{1}{d}} \log^{\frac{3}{2}}_2 (\frac{1}{\epsilon}) )$. 
	\end{result}
	\begin{remark}
		The estimates given above are looser than those presented in \cref{prop:approximation-error}. These simplified bounds are obtained from the more precise estimates by applying the inequality \( W(x) \leq x \) for \( x \geq 0 \), where \( W \) denotes Lambert's \( W \) function.
	\end{remark}
	The result builds on the observation that functions in \( X^{2,p}([0,1]) \) (for \( p \in \{2,\infty\} \)) can be approximated by linear combinations of products of Chebyshev polynomials of degree 0 and 1. We implement these products over \( [0,1]^d \) using the quantum signal processing (QSP) algorithm and combine them via the linear combination of unitaries (LCU) method. This construction allows us to derive upper bounds on the circuit width and depth in terms of \( \epsilon \) and \( d \). We then extend the result to the case \( 2 < p < \infty \).
	\begin{result}\label{result:informal-result2}
		(Informal version of \cref{cor:compplexity-estimate-2})
		Let $\epsilon \in (0,1)$ and $2 \leq p \leq \infty$. For $f \in X^{2, p}([0,1]^d)$, there exists a function, $g \in X^{2, p}([0,1]^d)$, defined on $[0,1]^d$ such that
		\begin{equation*}
			\| f- g \|_{L^p([0,1]^d)} \leq \epsilon.
		\end{equation*}
		Moreover, there exists a quantum circuit, $U_{f,\varepsilon}$, that outputs $g$ such that $U_{f,\varepsilon}$ has depth is at most 
		\begin{equation}
			\mathcal{O} \left(  
			d (12\beta \log_2 \beta)^{\alpha+\beta} 
			\alpha^{\alpha+\beta} 
			\epsilon^{-\frac{p}{2p-1}\left(1 + \frac{1}{d} \right)}
			\log_2^{\alpha+\beta} \left(\frac{1}{\epsilon}\right)
			\epsilon^{-\frac{p}{d(2p-1)}}
			\right)
		\end{equation}
		and the width is at most 
		\begin{equation}
			\mathcal{O} \left( 2d +  (6\beta \log_2 \beta)^{\alpha}  \alpha^{\alpha} \epsilon^{-\frac{p}{d(2p-1)}} \log_2^{\alpha} \left(\frac{1}{\epsilon}\right)\right). 
		\end{equation}
		Here we have defined $\alpha=(3p-1)/(2p-1)$ and $\beta = \alpha (d-1)$.
	\end{result}

    We summarize the main contributions of our work as follows:
\begin{enumerate}
    \item[(i)] We extend the investigation in \cite{yu2024provable} to the Korobov function space which is a subspace of certain Sobolev spaces. This extension allows exploration of a new function space, thereby broadening the scope of function approximation on a quantum computer.
    \item[(ii)] We construct \emph{unparameterized} quantum circuits. This approach leverages the observation that Chebyshev polynomials can be implemented using a predetermined set of parameters within the quantum signal processing algorithm. Consequently, approximating functions via Chebyshev polynomials emerges as a natural strategy on a quantum computer.
    \item[(iii)] In contrast to classical neural network architectures, which typically employ greater width with limited depth, our quantum circuit constructions feature reduced width but increased depth. This contrast highlights a complementary relationship between the two approaches and suggests a trade-off in architectural design for function approximation.
\end{enumerate}

	\subsection{Related Works}\label{comparison}
	In this section, we provide a brief overview of related work in both the neural network and quantum computing literature.

	\subsubsection{Neural Network Literature}
	Neural network approximation theory for Korobov spaces has advanced considerably, particularly in addressing the curse of dimensionality. Recent work has shown that deep neural networks, especially those with ReLU activations, can efficiently approximate functions in these spaces. For example,   in \cite{montanelli2019new} established error bounds for deep ReLU networks by connecting their approximation capabilities to sparse grid methods. Moreover, Yang and Lu in  \cite{yang2023optimalkorobov} derived nearly optimal approximation rates for deep networks on Korobov functions, achieving \emph{super-convergence} rates that exceed those of traditional approximators. Other recent work \cite{mao2022approximation, li2025higherorderapproximationrates, fang2025korobovCNN}  have explored the use of deep convolutional neural networks (CNNs) for Korobov function approximation, demonstrating improved rates and enhanced expressivity. These results underscore the effectiveness of deep neural networks in learning functions from Korobov spaces and suggest their relevance for quantum circuit design and other high-dimensional applications.

	\subsubsection{Quantum Computing Literature} Universal approximation theorems for parameterized quantum circuits (PQCs) have been the focus of recent studies \cite{2021adrianonequbituniversalapproximant, 2021gotouniversalapproximationqml, gonon2023universalqnnreservoir, manzano2025approximation}. In particular, Manzano et al. \cite{manzano2025approximation} demonstrated the existence of PQCs capable of approximating functions in \(L^p(\mathbb T^d)\), the space of \(2\pi\)-periodic Lebesgue integrable functions, and in \(W^{2,k}(\mathbb T^d)\), the space of \(2\pi\)-periodic Sobolev functions. However, these works do not provide explicit quantum circuit constructions or algorithms. Our work extends presents explicit quantum circuit designs along with corresponding complexity estimates for approximating Korobov functions, which are subspaces of certain Sobolev spaces.

Recent advances have initiated the study of approximation theory for PQCs. Notably, Chen et al. in \cite{yu2024provable} establish non-asymptotic error bounds for PQCs, showing that data re-uploading architectures can efficiently approximate multivariate polynomials and smooth functions. Their results further indicate that, under appropriate smoothness conditions, PQCs can achieve approximation efficiency comparable to or exceeding that of classical deep ReLU networks, particularly regarding circuit depth and parameter count.

While PQC approximation theory does not directly address optimization and training dynamics, these aspects have been studied elsewhere. For example, Khadijeh et al. in \cite{liu2023analytic} analyze the dynamics of wide quantum neural networks (QNNs), providing an analytic framework for convergence behavior during training, and demonstrating exponential decay of residual training error as a function of system parameters. In contrast, You and Wu in \cite{you2021exponentially} examine QNN loss landscapes, showing that an exponentially large number of local minima can pose significant optimization challenges. Additionally, You et al. in \cite{you2022convergencetheoryoverparameterizedvariational} develop a convergence theory for over-parameterized variational quantum eigensolvers, identifying thresholds for parameter counts necessary for efficient convergence, depending on the system and Hamiltonian characteristics.

Finally, \textcite{an2023theoryquantumdifferentialequation} establishes theoretical bounds on quantum algorithms for solving linear ordinary differential equations and introduces fast-forwarding techniques that improve efficiency in specific instances. These contributions collectively strengthen the theoretical foundation of our work by clarifying the capabilities and limitations of PQCs in function approximation and differential equation solving within quantum machine learning frameworks.

	\subsection{Organization}
The remainder of the paper is organized as follows. \cref{section:prelims} introduces the notation and provides background on function theory, quantum computing, and relevant quantum algorithms. \cref{main-results} presents our main results, beginning with the quantum implementation of linear combinations of Chebyshev polynomials, followed by the definition of Korobov function spaces and corresponding approximation results. Finally, \cref{future} concludes the paper with remarks and directions for future research.

\subsection{Acknowledgments} 
We thank Christopher Schwab from ETH Z\"{u}rich for valuable discussions and insightful comments. Junaid Aftab acknowledges the support by the National Science Foundation under the grant DMS-2231533. Haizhao Yang was partially supported by the US National Science Foundation under awards DMS-2244988, DMS-2206333, the Office of Naval Research Award N00014-23-1-2007, and the DARPA D24AP00325-00.

	\section{Preliminaries}\label{section:prelims}
We present preliminary material necessary for the subsequent developments. \cref{subsection:notation} defines the notation used throughout the manuscript. \cref{function-theory} reviews key aspects of function theory, while \cref{quantum-prelims} covers the mathematical foundations of quantum computing. Finally, \cref{quant-algs} provides an overview of the quantum signal processing and linear combination of unitaries algorithms.

	\subsection{Notation}\label{subsection:notation}
We adopt the conventions $0! := 1$ and $0^0 := 1$. The symbols $\mathbb{R}$, $\mathbb{C}$, and $\mathbb{N}$ denote the sets of real numbers, complex numbers, and natural numbers, respectively. For $z \in \mathbb{C}$, we denote its complex conjugate by $z^*$.  For a subset $A \subseteq \mathbb{R},\mathbb{C}$, let $\mathds{1}_A(z)$ denote the indicator function:
\begin{equation}
    \mathds{1}_A(z) = 
    \begin{cases}
        1, & \text{if } z \in A, \\
        0, & \text{if } z \notin A.
    \end{cases}
\end{equation}
For $d \geq 1$, a multi-index $\boldsymbol{\alpha}$ is an element of $\mathbb{N}^d$. Its $1$-norm and $\infty$-norm are defined as
\begin{equation}
    \|\boldsymbol{\alpha}\|_1 = \sum_{j=1}^d |\alpha_j|, \quad 
    \|\boldsymbol{\alpha}\|_\infty = \max_{1 \leq j \leq d} |\alpha_j|.
\end{equation}
We define the special multi-indices $\boldsymbol{1} = (1, \ldots, 1)$ and $\boldsymbol{2} = (2, \ldots, 2)$. The following component-wise operations on multi-indices are considered:
\begin{align}
    \boldsymbol{\alpha} \cdot \boldsymbol{\beta} & := (\alpha_1 \beta_1, \ldots, \alpha_d \beta_d), \\
    b \cdot \boldsymbol{\alpha} & := (b \alpha_1, \ldots, b \alpha_d), \quad b \in \mathbb{R}, \\
    c^{\boldsymbol{\alpha}} & := (c^{\alpha_1}, \ldots, c^{\alpha_d}), \quad c \in \mathbb{R}.
\end{align}

For non-negative functions $f,g : \mathbb{N} \to \mathbb{R}^+$, we use the standard complexity notation:
\begin{itemize}
    \item[(i)] $f(n) = \mathcal{O}(g(n))$ if there exist constants $C>0$ and $N \in \mathbb{N}$ such that $f(n) \leq C g(n)$ for all $n \geq N$.
    \item[(ii)] $f(n) = \Omega(g(n))$ if there exist constants $C>0$ and $N \in \mathbb{N}$ such that $f(n) \geq C g(n)$ for all $n \geq N$.
    \item[(iii)] $f(n) = \Theta(g(n))$ if and only if $f(n) = \mathcal{O}(g(n))$ and $f(n) = \Omega(g(n))$.
\end{itemize}

	\subsection{Function Theory}\label{function-theory}
We review key elements of function theory relevant to our work. \cref{orthogonal-polys} discusses orthogonal polynomials, while \cref{function-spaces} introduces the function spaces used in our analysis.

	\subsubsection{Orthogonal Polynomials}\label{orthogonal-polys} 
	Let \( \mathbb{R}[x] \) and \( \mathbb{C}[x] \) denote the polynomial rings in one variable over \( \mathbb{R} \) and \( \mathbb{C} \), respectively, where \( x \) is an indeterminate. For \( p(x) \in \mathbb{C}[x] \), \( p^*(x) \) represents the polynomial in \( \mathbb{C}[x] \) obtained by taking the complex conjugate of each coefficient of \( p(x) \). We will consider the following three examples of orthogonal polynomials in this work:
	
	\begin{enumerate}
		\item[(i)] For $r \in \mathbb N\cup \{0\}, T_r(x) \in \mathbb R[x]$ denotes the degree-$r$ Chebyshev polynomial of the first-kind defined recursively by $T_0(x)=1, T_1(x)= x$ and 
		\begin{align}
			T_r(x) & = 2xT_{r-1}(x) - T_{r-2}(x), \quad r \geq 2
		\end{align}
		on $[-1,1]$. 
		\item[(ii)] Similarly, $U_r(x) \in \mathbb R[x]$ denotes the degree-$r$ Chebyshev polynomial of the second-kind defined recursively by $T_0(x)=1, T_1(x)= 2x$ and 
		\begin{align}
			S_r(x) & = 2xS_{r-1}(x) - S_{r-1}(x), \quad r \geq 2
		\end{align}
		on $[-1,1]$. 
	\end{enumerate}

	\subsubsection{Function Spaces}\label{function-spaces}
Let $[0,1]^d \subseteq \mathbb{R}^d$ denote the unit cube equipped with the $d$-dimensional Lebesgue measure. We consider the following function spaces:
	
	\begin{itemize}
		\item[(i)] Let \( X \) be a set equipped with the counting measure. For $1 \leq p \leq  \infty$, we consider the function space of summable sequences defined as follows:
		\begin{equation}
			\ell^p(X) = \{\boldsymbol \alpha : X \to \mathbb R \,|\, f \text{ is Lebesgue summable and } \|\boldsymbol \alpha \|_{\ell^p(X)} < \infty\}.
		\end{equation}
		The corresponding norm is given by:
		\begin{align}
			\|\alpha\|_{\ell^p(X)} & := 
			\begin{cases} 
				\left( \sum_{x \in X} |\alpha_x|^p \right)^{1/p}, & \text{if } p < \infty, \\[8pt]
				\sup_{x \in X} |\alpha_x|, & \text{if } p = \infty.
			\end{cases}
		\end{align}
		\item[(ii)] For $1 \leq p \leq  \infty$, we consider the function space of Lebesgue integrable functions on $[0,1]^d$ defined as follows:
		\begin{equation}
			L^p([0,1]^d) = \{f : [0,1]^d \rightarrow \mathbb{R} \,|\, f \text{ is Lebesgue measurable and } \|f\|_{L^p([0,1]^d)} < \infty\}.
		\end{equation}
		The corresponding norm is given by:
		\begin{align}
			\|f\|_{L^p([0,1]^d)} & := 
			\begin{cases} 
				\left( \int_{[0,1]^d} |f(\boldsymbol{x})|^p \, d \boldsymbol{x} \right)^{1/p}, & \text{if } p < \infty, \\[8pt]
				\text{ess sup}_{\boldsymbol{x} \in [0,1]^d} |f(\boldsymbol{x})|, & \text{if } p = \infty.
			\end{cases}
		\end{align}
		\item[(iii)] We also consider the function space of Sobolev functions on $[0,1]^d$. Let \( k \in \mathbb{N} \) and \( 1 \leq p \leq \infty \). For each multi-index \( \boldsymbol{\alpha} \in \mathbb{N}^k \) of length \( k \), let $D^{\boldsymbol \alpha}f = D^{\alpha_1} \cdots D^{\alpha_k} f$ denote the $\boldsymbol{\alpha}$-th weak derivative of $f$. The Sobolev space is defined as:     \begin{align}
			W^{k,p}([0,1]^d) := \{ & f \in L^p([0,1]^d) \mid D^{\boldsymbol \alpha}f \in L^p([0,1]^d) \text{ for all } \alpha \in \mathbb{N}^d \text{ with } \| \boldsymbol \alpha \|_1 \leq k \}.
		\end{align}
		The corresponding norm is given by:
		\begin{align}
			\|f\|_{W^{k,p}([0,1]^d)} & := 
			\begin{cases} 
				\left( \sum_{\|\boldsymbol \alpha \|_1 \leq k} \|D^{\boldsymbol \alpha} f\|_{L^p([0,1]^d)}^{p} \right)^{1/p}, & \text{if } p < \infty, \\[8pt]
				\max_{\|\boldsymbol \alpha \|_1 \leq k} \|D^{\boldsymbol \alpha} f\|_{L^\infty([0,1]^d)}, & \text{if } p = \infty.
			\end{cases}
		\end{align}
		When \( p = 2 \), we denote \( W^{k,2}([0,1]^d) \) as \( H^k([0,1]^d) \) for \( k \in \mathbb{N} \). Moreover, 
		\begin{equation}
			H_0^k([0,1]^d) = \{ f \in H^k([0,1]^d) \mid D^{\boldsymbol \alpha}f |_{\partial [0,1]^d} = 0 \text{ for all } \alpha \in \mathbb{N}^d \text{ with } \| \boldsymbol \alpha \|_1 \leq k-1 \},
		\end{equation}
		is the subspace of functions whose weak derivatives of all orders up to 
		$k-1$ vanish on $\partial [0,1]^d$, the boundary of $[0,1]^d$.
	\end{itemize}

	\subsection{Quantum Computing}\label{quantum-prelims} 
This section provides a brief review of the fundamental elements of quantum computing from a mathematical perspective. For a more comprehensive introduction, the reader is referred to \cite{nielsen2002quantum, scherer2019mathematics}.

	\subsubsection{Dirac's Notation}
A quantum system is modeled by a finite-dimensional complex inner product space, known as a Hilbert space. For \( n \geq 2 \), the state of an \( n \)-dimensional quantum system is represented by a vector \( v \in \mathbb{C}^n \) with \(\|v\| = 1\). A quantum state \( v \) is denoted in Dirac notation as \(|v\rangle\), called a \emph{ket}, and its complex conjugate transpose is written as \(\langle v|\), called a \emph{bra}. This convention is known as Dirac's bra-ket notation. Given a basis \(\{\ket{i}\}_{i=1}^n\) for \(\mathbb{C}^n\), any quantum state can be expressed as
\begin{equation}
\ket{v} = \sum_{i=1}^n v_i \ket{i}.
\end{equation}
	
	\subsubsection{Single Qubit Systems} 
	The simplest example of a quantum system is a two-dimensional system, whose state is called a \emph{qubit} or quantum bit. A qubit is a two-dimensional complex vector with unit norm, i.e., \( \ket{v} = (v_1, v_2)^\top \in \mathbb{C}^2 \) with \(\|v\| = 1\). In Dirac notation, a basis for \(\mathbb{C}^2\) is given by
\begin{equation}
    \ket{0} = \begin{pmatrix} 1 \\ 0 \end{pmatrix}, \quad 
    \ket{1} = \begin{pmatrix} 0 \\ 1 \end{pmatrix},
\end{equation}
known as the \emph{computational basis}. An arbitrary qubit in this basis can be written as
\begin{equation}
    \ket{v} = v_1 \ket{0} + v_2 \ket{1}, \quad |v_1|^2 + |v_2|^2 = 1.
\end{equation}
A single-qubit quantum gate is a unitary operator \(U \in \mathbb{C}^{2 \times 2}\) satisfying \(U^\dagger U = U U^\dagger = I_2\), where \(I_2\) is the \(2 \times 2\) identity matrix. The Pauli operators are three fundamental single-qubit unitaries:
\begin{equation}
    \sigma_X = \begin{pmatrix} 0 & 1 \\ 1 & 0 \end{pmatrix}, \quad
    \sigma_Y = \begin{pmatrix} 0 & -i \\ i & 0 \end{pmatrix}, \quad
    \sigma_Z = \begin{pmatrix} 1 & 0 \\ 0 & -1 \end{pmatrix},
\end{equation}
also denoted as \(X, Y, Z\). The Pauli-\(X\) gate acts as a quantum NOT gate. Other gates, such as the Hadamard gate, have no classical analog:
\begin{equation}
    H = \frac{1}{\sqrt{2}} \begin{pmatrix} 1 & 1 \\ 1 & -1 \end{pmatrix}.
\end{equation}
Single-qubit rotation gates about the \(x\), \(y\), and \(z\) axes are defined as
\begin{equation}
    e^{-i \theta \sigma_X} = \begin{pmatrix} \cos\theta & -i \sin\theta \\ -i \sin\theta & \cos\theta \end{pmatrix}, \quad
    e^{-i \theta \sigma_Y} = \begin{pmatrix} \cos\theta & -\sin\theta \\ \sin\theta & \cos\theta \end{pmatrix}, \quad
    e^{-i \theta \sigma_Z} = \begin{pmatrix} e^{-i\theta} & 0 \\ 0 & e^{i\theta} \end{pmatrix}.
\end{equation}
A sequence of quantum gates forms a single-qubit quantum circuit. A generic single-qubit circuit acting on \(\ket{\psi}\) is shown in \cref{generic-single}.
\begin{figure}[h] \centering \begin{quantikz} \gategroup[wires=1,steps=5,style={rounded corners,draw=none,fill=blue!20}, background]{} \lstick{$\ket{\psi} \; \; $} & \gate{U_1} & \ \cdots\ & \gate{U_m} & \end{quantikz} \caption{A generic single qubit quantum circuit diagram.} \label{generic-single} \end{figure}
Measurement in a quantum circuit produces probabilistic outcomes, with probabilities determined by the state prior to measurement. For a Hermitian operator \(O\) and quantum state \(\ket{\psi}\), the probability of a specific outcome and the expected value are
\begin{align}
    \mathbb{P}(\text{Outcome}) &= |\langle \psi | O | \psi \rangle|^2, \\
    \mathbb{E}(\text{Outcome}) &= \langle \psi | O | \psi \rangle.
\end{align}

	\subsubsection{Multiple Qubit Systems}
A $k$-qubit quantum state is an element of
\begin{equation}
    \mathbb{C}^{2^k} := \underbrace{\mathbb{C}^2 \otimes \cdots \otimes \mathbb{C}^2}_{k \text{ times }},
\end{equation}
for some $k \geq 1$, where $\otimes$ denotes the tensor product. If $\ket{\psi}$ is a single-qubit state, then
\(\ket{\psi}^{\otimes k} \in \mathbb{C}^{2^k}\) represents the $k$-qubit state
$
\ket{\psi}^{\otimes k} = \ket{\psi} \otimes \cdots \otimes \ket{\psi}.
$
The standard computational basis for \(\mathbb{C}^{2^k}\) is
\begin{equation}
    \mathcal{B} = \{ \ket{j_k \cdots j_1} \mid j_i = 0,1 \text{ for } i = 1,\ldots,k \}.
\end{equation}
A $k$-qubit quantum gate is a unitary operator \(U \in \mathbb{C}^{2^k \times 2^k}\) such that \(U^\dagger U = U U^\dagger = I_{2^k}\). An important example of a two-qubit gate is the CNOT gate:
\begin{equation}
    \operatorname{CNOT} = 
    \begin{pmatrix}
        1 & 0 & 0 & 0 \\
        0 & 1 & 0 & 0 \\
        0 & 0 & 0 & 1 \\
        0 & 0 & 1 & 0
    \end{pmatrix}.
\end{equation}
A sequence of $k$-qubit gates forms a $k$-qubit quantum circuit. For such a circuit, the \emph{width} is the number of qubits, and the \emph{depth} is the maximum number of times a qubit is acted upon by a multi-qubit gate. For instance, the three-qubit circuit shown in \cref{3-qubit-circuit} has width 3 and depth 4.

	
	\begin{figure}[h]
		\centering
		\begin{quantikz}
			\gategroup[wires=3,steps=7,style={rounded corners,draw=none,fill=blue!20}, background]{}
			\lstick{$\ket{\psi} \; \; $} & \gate{U_1} & \gate[2]{U_2} &  & \gate{U_5}  & \gate[3]{U_6} &\\
			\lstick{$\ket{\psi} \; \; $} & & & \gate[2]{U_4}  && &  \\
			\lstick{$\ket{\psi} \; \; $} && \gate{U_3} && & & 
		\end{quantikz}
		\caption{A generic $3$-qubit quantum circuit diagram.}
		\label{3-qubit-circuit}
	\end{figure}

	\subsection{Quantum Algorithms}\label{quant-algs}
	We discuss the quantum algorithms that form the basis of this work. \cref{subsection:qsp} covers the quantum signal processing algorithm, and \cref{subsection:lcu} covers the linear combination of unitaries algorithm.
	
	\subsubsection{Quantum Signal Processing}\label{subsection:qsp}
	Quantum signal processing (QSP) is a powerful algorithmic paradigm to design quantum algorithms. QSP was first proposed in \cite{low2017qsp} to solve the quantum simulation problem. Since its inception, QSP has been applied to design quantum algorithms for several tasks in scientific computation. The QSP algorithm constructs $2 \times 2$
	unitary matrices whose entries are complex-valued polynomial functions of a real-valued scalar by taking products of various single qubit rotation and single qubit phase gates. More formally, let $x \in [-1, 1]$ be a scalar with a one-qubit encoding:
	\begin{equation}\label{equation:encoding}
		W(x) := 
		\begin{pmatrix}
			x & i\sqrt{1 - x^2} \\
			i\sqrt{1 - x^2} & x
		\end{pmatrix}
		:= e^{i \arccos(x) \sigma_x},
		\quad \quad
		\theta \in [0,\pi].
	\end{equation}
	For $\boldsymbol \varphi = (\varphi_0, \varphi_1, \ldots, \varphi_\ell) \in \mathbb R^{\ell+1}$, consider
	\begin{equation}\label{equation:qsp-ansatz}
		V_{\boldsymbol \varphi}(x) = e^{i\varphi_0 \sigma_z} W(x) e^{i\varphi_1 \sigma_z} W(x) e^{i\varphi_2 \sigma_z} \ldots W(x) e^{i\varphi_\ell \sigma_z}.
	\end{equation}
	The main result of the QSP algorithm is a complete characterization of which class of polynomial functions can be encoded using the sequence of gates in \eqref{equation:qsp-ansatz}. 
	
	\begin{prop}
		\label{qsp-complex-poly}
		(\cite[Theorem 3]{gilyen2019qsvt})
		Let $\ell \in \mathbb{N}$ and $x \in [-1, 1]$.
		There exists $\boldsymbol \varphi = (\varphi_0, \varphi_1, \ldots, \varphi_\ell) \in \mathbb{R}^{\ell+1}$ such that
		\begin{equation}\label{equation:qsp-result}
			V_{\boldsymbol{\varphi}}(x) = e^{i\varphi_0\sigma_z} \prod_{j=1}^{\ell} \left( W(x) e^{i\varphi_j\sigma_z} \right)
			=
			\begin{pmatrix}
				p(x) & iq(x)\sqrt{1-x^2} \\
				iq^*(x)\sqrt{1-x^2} & p^*(x)
			\end{pmatrix}.
		\end{equation}
		if and only if $p, q \in \mathbb{C}[x]$ such that:
		\begin{enumerate}
			\item[(i)] $\text{deg}(p(x)) \leq \ell$ and $\text{deg}(q(x)) \leq \ell - 1,$
			\item[(ii)] $p(x)$ has parity $\ell \mod 2$ and $q(x)$ has parity $\ell - 1 \mod 2$,
			\item[(iii)] For all  $x \in [-1, 1]$, we have $|p(x)|^2 + (1 - x^2) |q(x)|^2 = 1$.
		\end{enumerate}
	\end{prop}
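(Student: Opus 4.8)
The plan is to prove both implications by induction on $\ell$, exploiting the multiplicative structure of $V_{\boldsymbol\varphi}(x)$. Splitting off the last factor gives
\[
V_{(\varphi_0,\dots,\varphi_\ell)}(x) = V_{(\varphi_0,\dots,\varphi_{\ell-1})}(x)\, W(x)\, e^{i\varphi_\ell\sigma_z},
\]
and a direct $2\times 2$ matrix multiplication shows that if $V_{(\varphi_0,\dots,\varphi_{\ell-1})}(x)$ has the form in \eqref{equation:qsp-result} with polynomials $(p,q)$, then $V_{(\varphi_0,\dots,\varphi_\ell)}(x)$ again has that form, with
\begin{align}
\tilde p(x) &= e^{i\varphi_\ell}\bigl(x\,p(x) - (1-x^2)\,q(x)\bigr), \\
\tilde q(x) &= e^{-i\varphi_\ell}\bigl(p(x) + x\,q(x)\bigr),
\end{align}
and the same computation verifies that the diagonal/off-diagonal conjugation pattern of \eqref{equation:qsp-result} is preserved. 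Both directions are driven by this single recursion.

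For the \emph{forward} implication I would induct on $\ell$. The base case $\ell=0$ is immediate: $V_{(\varphi_0)}(x)=e^{i\varphi_0\sigma_z}$ is diagonal, so $p=e^{i\varphi_0}$ and $q=0$ satisfy (i)--(iii). For the inductive step, if $(p,q)$ satisfies (i)--(iii) at level $\ell-1$, then the recursion gives $\deg\tilde p\le\ell$ and $\deg\tilde q\le\ell-1$, and the claimed parities follow because multiplication by $x$ flips parity while multiplication by $1-x^2$ preserves it. Condition (iii) is simply the statement that the columns of the matrix are orthonormal on $[-1,1]$, which holds because $V_{\boldsymbol\varphi}(x)$ is a product of the unitaries $W(x)$ and $e^{i\varphi_j\sigma_z}$.

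The \emph{converse} is the substantive part. Given $(p,q)$ satisfying (i)--(iii) at level $\ell\ge 1$, I would invert the recursion to define the candidate level-$(\ell-1)$ data
\begin{align}
p'(x) &= x\,e^{-i\varphi_\ell}p(x) + (1-x^2)\,e^{i\varphi_\ell}q(x), \\
q'(x) &= -\,e^{-i\varphi_\ell}p(x) + x\,e^{i\varphi_\ell}q(x),
\end{align}
and choose the phase $\varphi_\ell$ to control degrees. Let $p[\ell]$ and $q[\ell-1]$ denote the coefficients of $x^\ell$ in $p$ and $x^{\ell-1}$ in $q$ (taken to be zero if the degree is smaller). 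One checks that the coefficient of $x^{\ell+1}$ in $p'$ and the coefficient of $x^\ell$ in $q'$ are both $\pm\bigl(e^{-i\varphi_\ell}p[\ell]-e^{i\varphi_\ell}q[\ell-1]\bigr)$. The key observation is that (iii), being a polynomial identity in $x$ valid on the infinite set $[-1,1]$, holds identically in $\mathbb{C}[x]$; comparing top-degree ($x^{2\ell}$) coefficients forces $|p[\ell]|=|q[\ell-1]|$. Hence there is a real $\varphi_\ell$ with $e^{-i\varphi_\ell}p[\ell]=e^{i\varphi_\ell}q[\ell-1]$ — take $\varphi_\ell=0$ if both coefficients vanish, otherwise solve $e^{2i\varphi_\ell}=p[\ell]/q[\ell-1]$, which is possible since the ratio has modulus one. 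For this $\varphi_\ell$ the leading coefficients just mentioned vanish, and the next coefficients down (the $x^\ell$ coefficient of $p'$ and the $x^{\ell-1}$ coefficient of $q'$) vanish automatically once one checks that $p'$ and $q'$ have the parities required at level $\ell-1$, namely $(\ell-1)\bmod 2$ and $(\ell-2)\bmod 2$; thus (i) and (ii) hold at level $\ell-1$. Condition (iii) at level $\ell-1$ follows by writing the level-$(\ell-1)$ matrix as the level-$\ell$ matrix times $e^{-i\varphi_\ell\sigma_z}W(x)^{-1}$ and using unitarity. The inductive hypothesis then yields $\varphi_0,\dots,\varphi_{\ell-1}$ realizing $(p',q')$, and appending $\varphi_\ell$ realizes $(p,q)$; the base case $\ell=0$ is handled directly, since (iii) forces $q\equiv 0$ and $|p|\equiv 1$, so $p=e^{i\varphi_0}$ for a suitable real $\varphi_0$.

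The main obstacle is the bookkeeping in the converse: one must verify that a \emph{single} phase $\varphi_\ell$ simultaneously annihilates the leading coefficients of both $p'$ and $q'$ — which is precisely where the normalization condition (iii) enters essentially, through $|p[\ell]|=|q[\ell-1]|$ — and that the resulting degree drop is by the full amount (so that $\deg q'\le \ell-2$, not merely $\ell-1$), which relies on the parity constraint (ii). The degenerate sub-case in which all these leading coefficients already vanish must be isolated and dispatched separately.
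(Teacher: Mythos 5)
The paper does not actually prove this proposition: it is imported verbatim as \cite[Theorem 3]{gilyen2019qsvt}, so there is no internal proof to compare yours against. Your argument is, in substance, the standard inductive ``peel off the last factor'' proof of the QSP characterization from that reference, and it is correct. With the paper's conventions for $W(x)$ and $e^{i\varphi\sigma_z}$, direct multiplication indeed gives $\tilde p = e^{i\varphi_\ell}\bigl(xp-(1-x^2)q\bigr)$ and $\tilde q = e^{-i\varphi_\ell}(p+xq)$, and the inverse update you write down for $(p',q')$ matches $V_\ell\,e^{-i\varphi_\ell\sigma_z}W(x)^{-1}$. The two load-bearing points of the converse are exactly the right ones: condition (iii), being a polynomial identity valid on the infinite set $[-1,1]$, forces $|p[\ell]|=|q[\ell-1]|$ by comparing $x^{2\ell}$ coefficients, so a single phase annihilates the top coefficients of both $p'$ and $q'$; and parity then supplies the vanishing of the next coefficients so that the degrees drop by the full two for $q'$. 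Your degenerate sub-case is also harmless, since $p[\ell]=q[\ell-1]=0$ together with parity already forces $\deg p\le\ell-2$ and $\deg q\le\ell-3$. The only step I would ask you to make explicit is the claim ``(iii) at level $\ell-1$ follows by unitarity'': you need that the matrix $\left(\begin{smallmatrix} p & iq\sqrt{1-x^2} \\ iq^*\sqrt{1-x^2} & p^*\end{smallmatrix}\right)$ built from any $(p,q)$ satisfying (iii) is itself unitary, which holds because its rows each have norm $1$ by (iii) and are orthogonal by the conjugate structure (for real $x$, $p\,\overline{iq^*\sqrt{1-x^2}}+iq\sqrt{1-x^2}\,\overline{p^*}=0$); only then can you remove the unitary factor $W(x)e^{i\varphi_\ell\sigma_z}$ and conclude (iii) for $(p',q')$. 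That is a one-line verification, not a gap in the strategy.
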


	\begin{remark}
		A polynomial has parity $0$ if all coefficients corresponding to odd powers of $x$ are $0$. Similarly, a polynomial has parity $1$ if all coefficients corresponding to even powers of $x$ are $0$. 
	\end{remark}

	Note that \cref{qsp-complex-poly} implies that $\bra{0} V_{\boldsymbol \varphi}(x) \ket{0}
	=
	p(x)$. This implies that the expected value of the output of the quantum circuit in \eqref{equation:qsp-ansatz}, as measured in the $\sigma_z$ basis, is the value of the polynomial $p(x)$.

	\begin{remark}
		The limitation of \cref{qsp-complex-poly} is that the achievable polynomial $p(x)$ must be accompanied by another polynomial $q(x)$ satisfying the conditions of \cref{qsp-complex-poly}.
		Fortunately, there exist  variants of the algorithm that can implement real-valued polynomials without being constrained to find a corresponding $q(x)$. However, we will not have to use these variants in this work. 
	\end{remark}

	\subsubsection{Linear Combination of Unitaries}\label{subsection:lcu} 
	The linear combination of unitaries (LCU) technique, introduced in \cite{Childs2012lcu}, provides a framework for realizing a linear combination of unitary operators on a quantum computer. The LCU technique has subsequently developed into a foundational primitive extensively utilized across a wide range of quantum algorithms.  Let \( k, T \in \mathbb{N} \), and let \( a_1, \ldots, a_T \in \R \) and \( U_1, \ldots, U_T \) be unitary operators. The linear combination of unitaries (LCU) algorithm implements the operator given by
	\begin{equation}
		U = \sum_{j=1}^{T} a_j U_j.
	\end{equation}
	The LCU algorithm operates under the assumption that two unitary operators can be implemented. First, it assumes that the state preparation oracle $F$ can be implemented:
	\begin{align}\label{F-LCU}
		F\ket{0} = \frac{1}{\sqrt{\| \boldsymbol{a} \|_1 }} \sum_{j=1}^{T} \sqrt{a_{j}} \ket{j}.
	\end{align}
	Second, it assumes that the following two-qubit controlled gate can be implemented:
	\begin{equation}\label{controlled-LCU}
		U_c = \sum_{j=1}^{T} U_j \otimes \ket{j}\bra{j}.
	\end{equation}
Assuming access to the required unitary operators, the LCU algorithm facilitates the implementation of the desired unitary operator $U_{\text{LCU}} = (I \otimes F^{\dagger} ) U_c (I \otimes F)$. Let \(\ket{\psi}\) denote the input quantum state, and let \(\ket{0}\) represent an ancilla qubit. We have:
\begin{equation}
	U_{\text{LCU}} \ket{\psi} \ket{0} = \frac{1}{\|\boldsymbol{a}\|_1} \left(\sum_{j=1}^{T} a_j U_j\right) \ket{\psi} \ket{0} + \ket{\perp}.
\end{equation}
Here \(\ket{\perp}\) denotes a potentially non-normalized state satisfying \((I \otimes \ket{0}\bra{0}) \ket{\perp} = 0\). We are mainly interested in computing the expected value \(\bra{\psi} U_{\text{LCU}} \ket{\psi}\) for a given quantum state \(\ket{\psi}\). This computation can be performed efficiently using the Hadamard test \cite{1998clevequantumalgorithmsrevisited}. It takes as input the state \(\ket{\psi}\) and the operator \(U_{\text{LCU}}\), producing as output a random variable derived from the measurement of an ancilla qubit, which provides the desired expected value. Specifically, \(\text{Re} \bra{\psi} U_{\text{LCU}} \ket{\psi} \in \mathbb{R}\) can be constructed by executing the quantum circuit illustrated in \cref{figure:lcu-plus-hadamard} (right).
Note that we have
\begin{equation}\label{lcu-output}
	\bra{0} \bra{\psi} U_{\text{LCU}} \ket{\psi} \ket{0} = \frac{1}{\|\boldsymbol{a}\|_1} \bra{\psi} \left(\sum_{j=1}^{T} a_j U_j\right) \ket{\psi}.
\end{equation}

\begin{remark}
	A minor modification of the Hadarmard test can be implemented to output  \( \operatorname{Im} \bra{\psi} \mathrm{U}_{\operatorname{LCU}} \ket{\psi} \in i \R \). 
\end{remark}

\begin{figure}[t]
	\centering
	\sbox0{
		\begin{quantikz}[wire types={b,b},classical gap=0.07cm]
			\gategroup[wires=2,steps=6,style={rounded corners,draw=none,fill=blue!20}, background]{$\text{U}_{\text{LCU}} $}
			\lstick{$\ket{0} \; \; $} &  &  & \gate[2]{U_c} &  & \\
			\lstick{$\ket{0} \; \; $} &   & \gate{F} &  & \gate{F^\dagger} &  
		\end{quantikz}
	}%
	\sbox1{
		\begin{quantikz}[wire types={q,b,b},classical gap=0.07cm]
			\gategroup[wires=3,steps=6,style={rounded corners,draw=none,fill=blue!20}, background]{$\text{Hadamard Test}$}
			\lstick{$\ket{0} \; \; $} & \gate{H}   &  \ctrl{1}  &  \gate{H}  & \meter{}   & \\
			\lstick{$\ket{0} \; \; $} &  & \gate[2]{U_c}  & &  & \\
			\lstick{$\ket{0} \; \; $} & \gate{F}  &  & \gate{F^\dagger} & &  
		\end{quantikz}
	}%
	\begin{tabular}{cc}
		\usebox0 & \usebox1
	\end{tabular}
	\caption{(Left) The quantum circuit implementing the linear combination of unitaries technique.  (Right) 
		The quantum circuit implementing the Hadamard test.
	}
	\label{figure:lcu-plus-hadamard}
\end{figure}

\begin{remark}
	We note that \eqref{lcu-output} implies that the success probability decays as \( \mathcal{O}( 1 / \| \boldsymbol{a} \|_1^2 ) \). In particular, if \( \| \boldsymbol{a} \|_1 = \Omega(c^n) \) for some $c > 1$, then the success probability decays exponentially. In the worst case, to prevent this exponential decay, it is necessary to employ robust oblivious amplitude amplification \cite{childs2015lcutaylor} to boost the success probability to \( 1 - \mathcal{O}(\delta) \) for any \( \delta > 0 \). To achieve this improvement, we must run \( \mathcal{O}( \|\boldsymbol{a}\|_1/\delta) \) rounds of robust oblivious amplitude amplification. However, we will not further address this detailed aspect, as the focus of this work is on the construction of the quantum circuit itself, rather than other concerns such as quantum state readout.
\end{remark}

\section{Main Results}\label{main-results}
We discuss the main results in this section. \cref{cheb-qc} examines the worst-case complexity of quantum circuits required to implement linear combinations of Chebyshev polynomials. \cref{lower-order} builds on this result and provides bounds on the worst-case complexity of quantum circuits required to approximate Korobov functions.

\subsection{Implementing Linear Combination of Chebyshev Polynomials}\label{cheb-qc}
We first discuss how to implement linear combinations of Chebyshev polynomials on a quantum computer. The QSP algorithm is fundamental to constructing quantum circuits for this task since QSP implements univariate Chebyshev polynomials. This is summarized in the lemma below:

\begin{lem}\label{lem:qsp-1d-chebyshev} 
	Let $T_r \in \mathbb{R}[x]$ be the degree-$r$ Chebyshev polynomial of the first kind. Let $\boldsymbol \varphi \in \mathbb{R}^{r+1}$ such that $\varphi_i = 0$
	for all $i = 0, \cdots, r$.  For this specific choice $\boldsymbol \varphi$, there exists an unparameterized quantum circuit, $U_r(x)$, such that 
	\begin{equation}
		\bra{0} U_r(x) \ket{0}  = T_r(x)
	\end{equation}
	for each $r \in \mathbb N \cup \{0\}$. 
	The quantum circuit has width $1$, depth $2r+1$, and $r+1$ (predetermined) parameters.
\end{lem}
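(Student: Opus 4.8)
The plan is to invoke \cref{qsp-complex-poly} with the specific choice $\boldsymbol\varphi = \mathbf{0} \in \mathbb{R}^{r+1}$ and to identify the resulting polynomial $p(x)$ appearing in the top-left corner of $V_{\boldsymbol\varphi}(x)$ with the Chebyshev polynomial $T_r(x)$. Concretely, for $\varphi_0 = \cdots = \varphi_r = 0$ the ansatz \eqref{equation:qsp-ansatz} collapses to $V_{\mathbf 0}(x) = W(x)^r$, since each phase factor $e^{i\varphi_j\sigma_z}$ becomes the identity. So the entire claim reduces to showing $\bra{0} W(x)^r \ket{0} = T_r(x)$, i.e. that the $(1,1)$-entry of the $r$-th power of the encoding matrix $W(x)$ is exactly the degree-$r$ Chebyshev polynomial of the first kind.

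The cleanest route to that identity is the substitution $x = \cos\theta$ with $\theta \in [0,\pi]$, under which \eqref{equation:encoding} gives $W(x) = e^{i\theta\sigma_x}$ (using $\arccos(\cos\theta) = \theta$). Then $W(x)^r = e^{ir\theta\sigma_x}$, and a direct computation of the matrix exponential (or the standard identity $e^{i\phi\sigma_x} = \cos\phi\, I + i\sin\phi\,\sigma_x$) yields
\begin{equation*}
	W(x)^r = \begin{pmatrix} \cos(r\theta) & i\sin(r\theta) \\ i\sin(r\theta) & \cos(r\theta) \end{pmatrix},
\end{equation*}
so that $\bra{0} W(x)^r \ket{0} = \cos(r\theta) = \cos(r\arccos x) = T_r(x)$, which is the defining trigonometric form of the Chebyshev polynomial. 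One should note that $\cos(r\theta)$ is indeed a polynomial in $x = \cos\theta$ of degree $r$ with the right parity, so that it is consistent with conditions (i)-(iii) of \cref{qsp-complex-poly} (with $q$ the corresponding Chebyshev-like polynomial $U_{r-1}$ up to normalization); alternatively one can verify the identity purely algebraically by induction on $r$ using the recursion $W(x)^{r} = W(x)\cdot W(x)^{r-1}$ together with the Chebyshev three-term recurrence $T_r = 2xT_{r-1} - T_{r-2}$ and the second-kind recurrence for the off-diagonal entries. The base cases $r=0$ (identity matrix, $T_0 = 1$) and $r=1$ ($W(x)$ itself, $T_1 = x$) are immediate.

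For the complexity bookkeeping: the circuit $V_{\mathbf 0}(x)$ as written in \eqref{equation:qsp-result} consists of $r$ applications of $W(x)$ interleaved with $r+1$ phase gates $e^{i\varphi_j\sigma_z}$; since $W(x) = e^{i\arccos(x)\sigma_x}$ is a single-qubit rotation, the circuit acts on one qubit, giving width $1$. Counting gates gives $r + (r+1) = 2r+1$ single-qubit gates in sequence, hence depth $2r+1$, and the $r+1$ phase angles $\varphi_0,\dots,\varphi_r$ are the (here all-zero, hence predetermined) parameters. I do not anticipate a genuine obstacle here — the result is essentially a dictionary translation of \cref{qsp-complex-poly} — the only point requiring a little care is making explicit that setting all phases to zero is a \emph{valid} choice of $\boldsymbol\varphi$ realizing $p = T_r$, which is exactly what the matrix-exponential computation above certifies, and then being consistent about the depth convention (counting both the $W$ factors and the phase factors as elementary single-qubit gates).
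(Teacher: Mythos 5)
Your proof is correct, and your primary argument takes a genuinely different route from the paper's. The paper proves the key identity
\begin{equation*}
W(x)^r = \begin{pmatrix} T_r(x) & i\sqrt{1-x^2}\,S_{r-1}(x) \\ i\sqrt{1-x^2}\,S_{r-1}(x) & T_r(x) \end{pmatrix}
\end{equation*}
purely algebraically, by induction on $r$ using the mutual recurrences $T_{r+1}(x) = xT_r(x) - (1-x^2)S_{r-1}(x)$ and $S_r(x) = T_r(x) + xS_{r-1}(x)$, staying entirely within polynomial identities and never invoking the trigonometric characterization of $T_r$. You instead substitute $x = \cos\theta$, observe $W(x)^r = e^{ir\theta\sigma_x}$, and read off $\bra{0}W(x)^r\ket{0} = \cos(r\theta) = T_r(x)$ from the matrix exponential. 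Your route is shorter and makes the closed form of $W(x)^r$ transparent in one line, at the cost of appealing to the identity $T_r(\cos\theta)=\cos(r\theta)$ rather than the three-term recurrence by which the paper defines $T_r$; the paper's inductive route is self-contained relative to its own definitions and simultaneously identifies the off-diagonal entries as second-kind Chebyshev polynomials. You do mention the inductive alternative in passing, and your complexity bookkeeping (width $1$, depth $2r+1$ counting the $r$ applications of $W(x)$ and the $r+1$ trivial phase gates, $r+1$ predetermined parameters) matches the paper exactly.
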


\begin{remark}
	Reference \cite{gilyen2019qsvt} proves a slightly modified version of \cref{lem:qsp-1d-chebyshev}. This discrepancy arises because \cite{gilyen2019qsvt} uses a slightly different quantum circuit for quantum signal processing than the one we use. For completeness, we prove the result below.
\end{remark}

\begin{proof}
	(\cref{lem:qsp-1d-chebyshev})
	The claim is true for $r=0$ since in this case $U_r(x) = I$ and the $(1,1)$ entry of $I$ is $T_0(x)=1$. We prove by induction that
	\begin{equation}
		W^r(x) = \begin{pmatrix}
			T_r(x) & i\sqrt{1-x^2}S_{r-1}(x) \\
			i\sqrt{1-x^2}S_{r-1}(x) & T_r(x)\end{pmatrix},\end{equation}
	for each $r \geq 1$. For $r=1$, we have
	\begin{equation}
		W(x) 
		=
		\begin{pmatrix}
			x & i\sqrt{1-x^2} \\
			i\sqrt{1-x^2} & x
		\end{pmatrix}.
	\end{equation}
	Since $T_1(x) = x$ and $S_0(x)=1$, the claim is true. Now assume the claim is true for some $r > 1$. We have,
	\begin{align}
		W^{r+1}(x) &= W^r(x) W(x)  \\
		&= \begin{pmatrix}
			T_r(x) & i\sqrt{1-x^2}S_{r-1}(x) \\
			i\sqrt{1-x^2}S_{r-1}(x) & T_r(x)
		\end{pmatrix} 
		\begin{pmatrix}
			x & i\sqrt{1-x^2} \\
			i\sqrt{1-x^2} & x
		\end{pmatrix}  \\
		&= \begin{pmatrix}
			xT_r(x) - (1-x^2)S_{r-1}(x) & i\sqrt{1-x^2}(T_r(x) +xS_{r-1}(x)) \\
			i\sqrt{1-x^2}(T_r(x) +xS_{r-1}(x)) & xT_r(x) - (1-x^2)S_{r-1}(x)
		\end{pmatrix}  \\
		& = 
		\begin{pmatrix}
			T_{r+1}(x) & i\sqrt{1-x^2}S_{r}(x) \\
			i\sqrt{1-x^2}S_{r}(x) & T_{r+1}(x)\end{pmatrix}. 
	\end{align}
	The last equality follows since the Chebyshev polynomials follows that they also satisfy a pair of mutual recurrence equations:
	\begin{align}
		T_{r+1}(x) &=xT_r(x) - (1-x^2)S_{r-1}(x), \\
		S_{r}(x) &= (x)T_r(x) + xS_{r-1}(x).
	\end{align}
	Since we have
	\begin{equation}
		U_r(x) = e^{i0\sigma_z} \prod_{j=1}^{r} \left( W(x) e^{i 0 \sigma_z} \right) = W^r(x) = \begin{pmatrix}
			T_r(x) & i\sqrt{1-x^2}S_{r-1}(x) \\
			i\sqrt{1-x^2}S_{r-1}(x) & T_r(x)
		\end{pmatrix},
	\end{equation}
	the claim follows. Clearly, $U_r(x)$ has  width $1$, depth $2r+1$, and $r+1$ (predetermined) parameters. See \cref{figure:Chebyshev-qsp} for the corresponding quantum circuit.
\end{proof}

\begin{figure}[h]
	\centering
	\begin{quantikz}
		\gategroup[wires=1,steps=8,style={rounded corners,draw=none,fill=blue!20}, background]{}
		\lstick{$\ket{0} \; \; $} & \gate{I} & \gate{W(x)}  & \gate{I}  &  \ \cdots\ &  \gate{W(x)} & \gate{I} &
	\end{quantikz}
	\caption{The circuit diagram for the quantum circuit $U_r(x)$ in \cref{lem:qsp-1d-chebyshev}.}
	\label{figure:Chebyshev-qsp}
\end{figure}

We aim to implement linear combinations of products of Chebyshev polynomials on a quantum computer. To achieve this, we leverage the Linear Combination of Unitaries (LCU) technique. Since Chebyshev polynomials can be implemented as unitary operators via quantum signal processing (QSP), the LCU framework provides a natural and efficient route to realize linear combinations of them on a quantum computer.

\begin{remark}\label{measurement-not}
	In what follows, for ease of notation, we will write a multi-qubit system initialized in the state \( \ket{0}^{\otimes M} \) for some \( M \geq 1 \) simply as \( \ket{0} \). Moreover, we will denote the multi-qubit unitary operator $\sigma_z \otimes I \otimes \cdots \otimes I$ as simply \( Z^{(1)} \). Here, \( Z^{(1)} \) represents the Pauli \( \sigma_z \) observable which is measured only on the first qubit.
\end{remark}

\begin{prop}\label{pqc-lcu-cheb}
	Fix $d,M \in \mathbb N$.
	Let $x \in [-1,1]^d$ and consider a linear combination of products of Chebyshev polynomials:
	\begin{equation}
		f(\boldsymbol x) = \sum_{i=1}^M a_i \prod_{j=1}^d T_{n_{ij}}(x_j).
	\end{equation}
	Here $a_1, \cdots, a_M \in \mathbb R$ and $n_{i,j} \in \mathbb N $ for $i=1, \cdots, N$ and $j=1,\cdots, d$.
	There exists a quantum circuit \( U_{\operatorname{Cheb}}(\boldsymbol{x}) \) such that if a measurement corresponding to the Pauli \( \sigma_z \) observable is made only on the first qubit, then
	\begin{equation}
		\bra{0} U_{\operatorname{Cheb}}^\dagger(\boldsymbol x) Z^{(1)} U_{\operatorname{Cheb}}(\boldsymbol x)  \ket{0} = f(\boldsymbol{x}).
	\end{equation}
	The quantum circuit $U_{\operatorname{Cheb}}(\boldsymbol x)$ has depth \(\mathcal{O}( (2\|\boldsymbol{n}\|_1  + dM) \log_2 M)\) and width \(\mathcal{O}(d + \log_2 M)\).
\end{prop}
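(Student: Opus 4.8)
The plan is to build $U_{\operatorname{Cheb}}(\boldsymbol x)$ by combining $d$-fold tensor products of the single-qubit QSP circuits from \cref{lem:qsp-1d-chebyshev} with an LCU step over the index $i = 1, \ldots, M$. First I would handle a single product term $\prod_{j=1}^d T_{n_{ij}}(x_j)$. By \cref{lem:qsp-1d-chebyshev}, for each $j$ there is an unparameterized single-qubit circuit $U_{n_{ij}}(x_j)$ of depth $2 n_{ij}+1$ with $\bra 0 U_{n_{ij}}(x_j)\ket 0 = T_{n_{ij}}(x_j)$. Taking the tensor product $V_i(\boldsymbol x) := \bigotimes_{j=1}^d U_{n_{ij}}(x_j)$ on $d$ qubits gives $\bra{0}^{\otimes d} V_i(\boldsymbol x) \ket{0}^{\otimes d} = \prod_{j=1}^d T_{n_{ij}}(x_j)$, with depth $\max_j (2n_{ij}+1) \le 2\|\boldsymbol n\|_\infty + 1$ when the tensor factors act in parallel; summing the $M$ such blocks sequentially would cost $\sum_i (2\|\boldsymbol n_i\|_1/d + \ldots)$ — I need to be a little careful to recover the stated $2\|\boldsymbol n\|_1 + dM$ depth, so I would account depth as the total gate count across the controlled version, i.e. roughly $\sum_{i,j}(2n_{ij}+1) = 2\|\boldsymbol n\|_1 + dM$.

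Next I would assemble the LCU. Write $f = \sum_{i=1}^M a_i \langle 0|V_i(\boldsymbol x)|0\rangle$ in the unitary form required by \cref{subsection:lcu}: the "unitaries" being combined are the $V_i(\boldsymbol x)$, the coefficients are the $a_i$, and I use an index register of $\lceil \log_2 M\rceil$ ancilla qubits. The state-preparation oracle $F$ with $F\ket 0 = \tfrac{1}{\sqrt{\|\boldsymbol a\|_1}}\sum_i \sqrt{a_i}\ket i$ acts on $\mathcal O(\log_2 M)$ qubits and can be implemented with depth $\mathcal O(\log_2 M)$ (or at most $\mathcal O(M)$, which is dominated); sign issues when some $a_i<0$ are absorbed into $V_i$ by an overall phase, exactly as in the standard LCU treatment. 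The controlled gate $U_c = \sum_i V_i(\boldsymbol x)\otimes \ket i\bra i$ is the expensive part: controlling each of the $\sum_{i,j}$ single-qubit $W(x_j)$-or-$I$ gates on the $\log_2 M$-qubit index register multiplies its depth by $\mathcal O(\log_2 M)$, giving depth $\mathcal O((2\|\boldsymbol n\|_1 + dM)\log_2 M)$ and width $\mathcal O(d + \log_2 M)$ — this matches the claimed bounds. Then $U_{\operatorname{LCU}} = (I\otimes F^\dagger)U_c(I\otimes F)$ satisfies, by \eqref{lcu-output},
\begin{equation*}
	\bra{0}\bra{0} U_{\operatorname{LCU}} \ket{0}\ket{0} = \frac{1}{\|\boldsymbol a\|_1}\sum_{i=1}^M a_i \prod_{j=1}^d T_{n_{ij}}(x_j) = \frac{f(\boldsymbol x)}{\|\boldsymbol a\|_1}.
\end{equation*}

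Finally I would convert this amplitude into the advertised expectation-value form $\bra 0 U_{\operatorname{Cheb}}^\dagger(\boldsymbol x) Z^{(1)} U_{\operatorname{Cheb}}(\boldsymbol x)\ket 0 = f(\boldsymbol x)$. The cleanest route is the Hadamard-test gadget of \cref{figure:lcu-plus-hadamard}: prepend one more ancilla qubit, a Hadamard, a controlled-$U_{\operatorname{LCU}}$, and a Hadamard; measuring $\sigma_z$ on that ancilla yields $\operatorname{Re}\bra 0\bra 0 U_{\operatorname{LCU}}\ket 0\ket 0$, which (since $f$ is real) equals $f(\boldsymbol x)/\|\boldsymbol a\|_1$. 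Absorbing the normalization $\|\boldsymbol a\|_1$ — which is a fixed scalar, not part of the circuit — into the definition of $U_{\operatorname{Cheb}}$ via an extra single-qubit rotation on the measured qubit (rescaling the effective observable, or equivalently one ancilla rotation by $\arccos$ of the appropriate ratio, valid because $|f|\le \|\boldsymbol a\|_1$ on $[-1,1]^d$), I get exactly $f(\boldsymbol x)$ as the $Z^{(1)}$-expectation. Adding one ancilla and $\mathcal O(\log_2 M)$ controlled gates leaves the depth at $\mathcal O((2\|\boldsymbol n\|_1 + dM)\log_2 M)$ and width at $\mathcal O(d+\log_2 M)$, completing the proof.

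The main obstacle I anticipate is bookkeeping the depth honestly: one must make sure that controlling the QSP circuits on the $\lceil\log_2 M\rceil$-qubit index register, together with the amplitude-to-expectation conversion, only introduces the multiplicative $\log_2 M$ factor and not an extra factor of $M$ or $d$, and that handling negative $a_i$ and the $1/\|\boldsymbol a\|_1$ normalization does not secretly inflate width or depth. The polynomial-identity content (the $W^r(x)$ computation) is already done in \cref{lem:qsp-1d-chebyshev}, so no further analytic work is needed there.
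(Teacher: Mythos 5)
Your proposal follows essentially the same route as the paper's proof: tensor products of the single-qubit QSP circuits from \cref{lem:qsp-1d-chebyshev} to realize each product term, the LCU construction with an $\mathcal{O}(\log_2 M)$-qubit index register whose controlled unitary dominates the cost, sign absorption for negative $a_i$, and a final Hadamard test, with the identical depth accounting $(2\|\boldsymbol{n}\|_1 + dM)\cdot \mathcal{O}(\log_2 M)$. If anything, you are more careful than the paper about the $1/\|\boldsymbol{a}\|_1$ normalization, which the paper's displayed identity for $\bra{0}U_{\operatorname{LCU}}\ket{0}$ silently omits.
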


\begin{proof}
	For each $n_{i,j} \in \mathbb N$,
	\cref{lem:qsp-1d-chebyshev} implies that there exists a quantum circuit $U_{n_{i,j}}(x_j)$ such that
	$\bra{0} U_{n_{i,j}}(x_j) \ket{0} = T_{n_{ij}}(x_j)$. If we consider the unitary operator
	\begin{equation}
		U_{i}(\boldsymbol x) = \bigotimes_{j=1}^d U_{n_{i,j}} (x_j),
	\end{equation} 
	then $U_{i}(\boldsymbol x)$ can be implemented with a width $d$ quantum circuit comprising of $2\|\boldsymbol{n}_i\|_1+d$ quantum gates. Here we have defined $\boldsymbol{n}_i=(n_{i1},\cdots,n_{id})$. 
	We now aim to use the LCU technique to implement the following unitary operator:
	\begin{equation}\label{equation:lcu-operator}
		U_{\operatorname{Cheb}}(\boldsymbol x) := U_{\operatorname{LCU}}(\boldsymbol x) = \sum_{i=1}^N a_i U_{i}(\boldsymbol x). 
	\end{equation}

	\begin{enumerate}
		\item[(i)] First consider the unitary operator
		\begin{equation}\label{equation:state-preperation}
			F\ket{0} = \frac{1}{\sqrt{\| \boldsymbol a \|_1}}
			\sum_{j=1}^{N} \sqrt{a_{\boldsymbol{i}}}\ket{j}.
		\end{equation}
		Without loss of generality, assume that each coefficient \(a_{\boldsymbol{i}}\) is strictly positive. If a coefficient is negative, its sign can be absorbed into the corresponding unitary operator \(U_{i}(\boldsymbol{x})\) without affecting the construction. Since there are $M$ terms in the sum in \cref{equation:lcu-operator}, the function \(F\) can be implemented using a quantum circuit with \(\mathcal{O}(M)\) gates acting on \(\mathcal{O}(\log_2 M)\) ancilla qubits.

		\item[(ii)] Next consider the controlled unitary operator 
		\begin{equation}
			U_c(\boldsymbol x) = \sum_{i=1}^{N} U_{i}(\boldsymbol x) \otimes \ket{i}\bra{i}.
		\end{equation}
		This operator \(U_c(\boldsymbol x)\) acts on a composite quantum system that includes \(\mathcal{O}(d)\) computational qubits and \(\mathcal{O}(\log_2 M)\) ancilla qubits. 
		Since each \(U_{i}(\boldsymbol x)\) can be implemented using a \(2\|\boldsymbol{n}_i\|_1+d\) single-qubit gates, \(U_c(\boldsymbol x)\) can be implemented via
		\begin{equation}
			\sum_{i=1}^M (2\|\boldsymbol{n}_i\|_1+d)
			= \sum_{i=1}^M \sum_{j=1}^d 2 n_{ij} +d M
			:= 2\|\boldsymbol{n}\|_1  + dM
		\end{equation}
		single qubit gates controlled on \(\mathcal{O}( \log_2 M )\) ancilla qubits. A $\mathcal{O}(\log_2 M)$-qubit controlled gate acting can be implemented via a quantum circuit using only CNOT gates and single-qubit gates such that the circuit has depth \(\mathcal{O}(\log_2 M)\) \cite{daSilva2022LinearDepthMultiQubit, yu2024provable}. Consequently, the overall implementation of \(U_c(\boldsymbol x)\) involves a quantum circuit with depth \(\mathcal{O}( (2\|\boldsymbol{n}\|_1  + dM) \log_2 M)\) and width \(\mathcal{O}(d + \log_2 M)\). 
		
		\item[(iii)] Next, we leverage the LCU algorithm to implement the operator \(U_{\text{LCU}}\) defined as 
        \begin{equation}
		U_{\text{LCU}}(\boldsymbol{x}) = (I \otimes F^{\dagger}) U_c(\boldsymbol{x}) (I \otimes F).
        \end{equation}
        The asymptotic scaling of both the depth and the width for implementing \(U_{\text{LCU}}\) is dominated by the complexity of \(U_c(\boldsymbol x)\).
	\end{enumerate}
	The desired quantum circuit, \( U_{\operatorname{Cheb}}(\boldsymbol{x}) \), is constructed by applying the Hadamard test to a single ancilla qubit in order to estimate the expected value of \( U_{\operatorname{LCU}} \). Note that we have
	\begin{equation}
		\bra{0}  U_{\operatorname{LCU}}(\boldsymbol x) \ket{0} 
		= \sum_{i=1}^M a_i \prod_{j=1}^d \bra{0} U_{n_{ij}} \left( x_j \right) \ket{0} 
		= \sum_{i=1}^M a_i \prod_{j=1}^d T_{n_{ij}}(x_j).
	\end{equation}
	The additional operations involved in the Hadamard test have \( \mathcal{O}(1) \) complexity. Thus, the overall complexity of the algorithm is determined by the implementation of the LCU algorithm. This completes the proof.
\end{proof}

The significance of \cref{pqc-lcu-cheb} is that we have an explicit quantum circuit along with appropriate complexity estimates to implement an arbitrary 
linear combination of Chebyshev polynomials on a quantum computer. We will leverage this result in our analysis below.

\subsection{Approximating Order Korobov Functions}\label{lower-order}
We begin by describing the hierarchical basis construction that underlies the sparse grid decomposition. We then introduce the Korobov function space and provide an estimate of the quantum circuit complexity required to approximate low-order Korobov functions.

\subsubsection{Hierarchical Basis}
We first consider the one-dimensional case (\(d = 1\)). Fix \(h_n = 2^{-n} \in \mathbb{R}\) for some \(n \in \mathbb{N}\). Many numerical methods rely on uniform discretizations of the domain. Let \(\mathcal{P}_n\) denote a uniform discretization of \([0,1]\) consisting of the points \(\{ i h_n \}_{i=1}^{2^n - 1}\). We now define a piecewise linear function, referred to as the \emph{hat function}:
\begin{equation}\label{hat-func}
	\phi(x) = 
	\begin{cases} 
		1 - |x|, & \text{if } x \in [-1, 1], \\
		0, & \text{otherwise.}
	\end{cases}
\end{equation}
The approach is to consider a family of finite-dimensional vector spaces of one-dimensional hat functions, which can be used to approximate more complex functions on \([0,1]\). Specifically, for each \(x_i \in \mathcal{P}_n\), we define \(\phi_{n,i}(x)\) by
\begin{equation}
	\phi_{n,i}(x) = \phi\left( \frac{x - x_i}{h_n} \right).
\end{equation}
This function \( \phi_{n,i}(x) \) is a scaled and translated version of the hat function \( \phi(x) \), centered at \( x_i \) with width \( h_n \). We now consider the finite-dimensional vector, \( V_n \), defined as follows:
\begin{equation}
	V_n = \text{span}\{\phi_{n,i} : 1 \leq i \leq 2^n - 1\}.
\end{equation}
This space \( V_n \) forms our approximating space. The basis functions for $V_n $ are called the 1-dimensional nodal basis (\cref{figure:nodal-basis}).
\begin{figure}[t]
	\centering
	\[
	\begin{tikzpicture}[scale=0.8]
		\begin{axis}[
			axis lines = center,
			xmin=-0.2, xmax=1.2,
			ymin=-0.2, ymax=1.2,
			xtick={0},
			ytick=\empty, 
			xlabel style={below right},
			ylabel style={above left},
			minor tick num=1,
			width=10cm,
			height=6cm,
			axis line style={thick, black}, 
			]
			
			\node[circle, fill=black, inner sep=1.5pt] at (axis cs: 0/8, 0) {};
			\node[circle, fill=black, inner sep=1.5pt] at (axis cs: 1/8, 0) {};
			\node[circle, fill=black, inner sep=1.5pt] at (axis cs: 2/8, 0) {};
			\node[circle, fill=black, inner sep=1.5pt] at (axis cs: 3/8, 0) {};
			\node[circle, fill=black, inner sep=1.5pt] at (axis cs: 4/8, 0) {};
			\node[circle, fill=black, inner sep=1.5pt] at (axis cs: 5/8, 0) {};
			\node[circle, fill=black, inner sep=1.5pt] at (axis cs: 6/8, 0) {};
			\node[circle, fill=black, inner sep=1.5pt] at (axis cs: 7/8, 0) {};
			\node[circle, fill=black, inner sep=1.5pt] at (axis cs: 8/8, 0) {};
			
			\addplot[domain=0/8:2/8, samples=100, blue, thick] {1 - abs(8*(x - 1/8))} node[pos=0.5, above, black] {\(\phi_{3,1}\)};
			\addplot[domain=1/8:3/8, samples=100, blue, thick] {1 - abs(8*(x - 2/8))} node[pos=0.5, above, black] {\(\phi_{3,2}\)};
			\addplot[domain=2/8:4/8, samples=100, blue, thick] {1 - abs(8*(x - 3/8))} node[pos=0.5, above, black] {\(\phi_{3,3}\)};
			\addplot[domain=3/8:5/8, samples=100, blue, thick] {1 - abs(8*(x - 4/8))} node[pos=0.5, above, black] {\(\phi_{3,4}\)};
			\addplot[domain=4/8:6/8, samples=100, blue, thick] {1 - abs(8*(x - 5/8))} node[pos=0.5, above, black] {\(\phi_{3,5}\)};
			\addplot[domain=5/8:7/8, samples=100, blue, thick] {1 - abs(8*(x - 6/8))} node[pos=0.5, above, black] {\(\phi_{3,6}\)};
			\addplot[domain=6/8:8/8, samples=100, blue, thick] {1 - abs(8*(x - 7/8))} node[pos=0.5, above, black] {\(\phi_{3,7}\)};
			
		\end{axis}
	\end{tikzpicture}
	\]
	\caption{The $1$-dimensional nodal basis for $V_3$ consists of functions $\phi_{3,i}$, $1 \leq i \leq 7$.
	}
	\label{figure:nodal-basis}
\end{figure}
The sparse grid construction, discussed in the next subsection, exploits the fact that \(V_n\) can be defined as a sequence of nested approximation spaces, each corresponding to progressively refined grids. The key idea is to select grid points hierarchically. Consider a sequence of nested grids \(\{G_\ell\}_{\ell=1}^n\), where the level-\(k\) grid is defined by
\begin{equation}
	G_\ell := \left\{ x_{i}^{(\ell)} = \frac{i}{2^\ell} : i = 1, 3, 5, \dots, 2^\ell - 1 \right\}.
\end{equation}
Note that only the odd-indexed points are taken. This ensures that \( G_{\ell} \cap G_{\ell'} = \emptyset \) for \( \ell \neq \ell' \), enabling a hierarchical decomposition. Note that we have
\begin{equation}
	\mathcal{P}_n = \coprod_{\ell=1}^n G_\ell,
\end{equation}
which avoids redundancy and supports a multi-resolution representation. Note that if we consider the finite-dimensional vector space
\begin{equation}
	W_\ell = \text{span}\{\phi_{\ell,i} : i=1,3,5\cdots, 2^\ell - 1 \},
\end{equation}
it follows that the space \( V_n \) is the direct sum \begin{equation}
	V_n = \bigoplus_{1 \leq \ell \leq n} W_\ell.
\end{equation}
The basis functions of \( W_\ell \) are referred to as the 1-dimensional hierarchical basis functions (\cref{figure:hierarchical-basis}). The 1-dimensional hierarchical basis functions capture the variations in resolution between successive levels.

\begin{figure}[h]
	\centering
	\begin{subfigure}[b]{0.32\linewidth} 
		\centering
		\begin{tikzpicture}[scale=0.45]
			\begin{axis}[
				axis lines = center,
				xmin=-0.1, xmax=1.1,
				ymin=-0.1, ymax=1.2,
				xtick={0},
				ytick=\empty,
				xlabel style={below right},
				ylabel style={above left},
				minor tick num=1,
				width=10cm,
				height=6cm,
				axis line style={thick, black},
				]
				\node[circle, fill=black, inner sep=1.5pt] at (axis cs: 0, 0) {};
				\node[circle, fill=black, inner sep=1.5pt] at (axis cs: 1/2, 0) {};
				\node[circle, fill=black, inner sep=1.5pt] at (axis cs: 1, 0) {};
				
				\addplot[domain=0:1, samples=100, blue, ultra thick] {1 - abs(2*(x - 1/2))} node[pos=0.5, above, black] {\(\phi_{1,1}\)};
			\end{axis}
		\end{tikzpicture}
		\caption{$W_1$}
		\label{hierarchical-basis:(a)}
	\end{subfigure}
	\quad
	\begin{subfigure}[b]{0.3\linewidth} 
		\centering
		\begin{tikzpicture}[scale=0.45]
			\begin{axis}[
				axis lines = center,
				xmin=-0.1, xmax=1.1,
				ymin=-0.1, ymax=1.2,
				xtick={0},
				ytick=\empty,
				xlabel style={below right},
				ylabel style={above left},
				minor tick num=1,
				width=10cm,
				height=6cm,
				axis line style={thick, black},
				]
				\node[circle, fill=black, inner sep=1.5pt] at (axis cs: 0/4, 0) {};
				\node[circle, fill=black, inner sep=1.5pt] at (axis cs: 1/4, 0) {};
				\node[circle, fill=black, inner sep=1.5pt] at (axis cs: 2/4, 0) {};
				\node[circle, fill=black, inner sep=1.5pt] at (axis cs: 3/4, 0) {};
				\node[circle, fill=black, inner sep=1.5pt] at (axis cs: 4/4, 0) {};
				
				\addplot[domain=0/4:2/4, samples=100, blue, ultra thick] {1 - abs(4*(x - 1/4))} node[pos=0.5, above, black] {\(\phi_{2,1}\)};
				\addplot[domain=2/4:4/4, samples=100, blue, ultra thick] {1 - abs(4*(x - 3/4))} node[pos=0.5, above, black] {\(\phi_{2,3}\)};
			\end{axis}
		\end{tikzpicture}
		\caption{$W_2$}
		\label{hierarchical-basis:(b)}
	\end{subfigure}
	\quad
	\begin{subfigure}[b]{0.3\linewidth} 
		\centering
		\begin{tikzpicture}[scale=0.45]
			\begin{axis}[
				axis lines = center,
				xmin=-0.1, xmax=1.1,
				ymin=-0.1, ymax=1.2,
				xtick={0},
				ytick=\empty,
				xlabel style={below right},
				ylabel style={above left},
				minor tick num=1,
				width=10cm,
				height=6cm,
				axis line style={thick, black},
				]
				\node[circle, fill=black, inner sep=1.5pt] at (axis cs: 0/8, 0) {};
				\node[circle, fill=black, inner sep=1.5pt] at (axis cs: 1/8, 0) {};
				\node[circle, fill=black, inner sep=1.5pt] at (axis cs: 2/8, 0) {};
				\node[circle, fill=black, inner sep=1.5pt] at (axis cs: 3/8, 0) {};
				\node[circle, fill=black, inner sep=1.5pt] at (axis cs: 4/8, 0) {};
				\node[circle, fill=black, inner sep=1.5pt] at (axis cs: 5/8, 0) {};
				\node[circle, fill=black, inner sep=1.5pt] at (axis cs: 6/8, 0) {};
				\node[circle, fill=black, inner sep=1.5pt] at (axis cs: 7/8, 0) {};
				\node[circle, fill=black, inner sep=1.5pt] at (axis cs: 8/8, 0) {};
				
				\addplot[domain=0/8:2/8, samples=100, blue, ultra thick] {1 - abs(8*(x - 1/8))} node[pos=0.5, above, black] {\(\phi_{3,1}\)};
				\addplot[domain=2/8:4/8, samples=100, blue, ultra thick] {1 - abs(8*(x - 3/8))} node[pos=0.5, above, black] {\(\phi_{3,3}\)};
				\addplot[domain=4/8:6/8, samples=100, blue, ultra thick] {1 - abs(8*(x - 5/8))} node[pos=0.5, above, black] {\(\phi_{3,5}\)};
				\addplot[domain=6/8:8/8, samples=100, blue, ultra thick] {1 - abs(8*(x - 7/8))} node[pos=0.5, above, black] {\(\phi_{3,7}\)};
			\end{axis}
		\end{tikzpicture}
		\caption{$W_3$}
		\label{hierarchical-basis:(c)}
	\end{subfigure}
	\caption{The $1$-dimensional hierarchical basis vectors for $W_1, W_2$ and $W_3$ are plotted.}
	\label{figure:hierarchical-basis}
\end{figure}

We now extend to higher dimensions (\(d>1\)) via a tensor product construction. Let \(\boldsymbol{h}_{\boldsymbol{n}} = 2^{-\boldsymbol{n}}\) for \(\boldsymbol{n} \in \mathbb{N}^d\) and consider \(\boldsymbol{n}=(n,\ldots,n)\) for \(n \in \mathbb{N}\). A uniform discretization of \([0,1]^d\) is \(\mathcal{P}_{\boldsymbol{n}} = \{ \boldsymbol{i} \cdot \boldsymbol{h}_{\boldsymbol{n}} \}_{\boldsymbol{i}=\boldsymbol{1}}^{2^{\boldsymbol{n}}-\boldsymbol{1}}\), and the \(d\)-dimensional \emph{hat functions} are defined as products of the one-dimensional basis functions.
\begin{equation}
	\phi_{\boldsymbol{n}, \boldsymbol{i}}(\boldsymbol{x}) 
	= \prod_{j=1}^{d} \phi_{n_j,i_j}(x_j) 
	= \prod_{j=1}^{d} \phi \left( \frac{x_j - x_{i_j}}{h_{n_j}} \right).
\end{equation}
We define the function spaces \( V_{\boldsymbol{n}} \) and \( W_{\boldsymbol{\ell}} \) as follows:
\begin{align}
	V_{\boldsymbol{n}} &= \operatorname{span}\left\{ \phi_{\boldsymbol{n}, \boldsymbol{i}} : \boldsymbol{1} \leq \boldsymbol{i} \leq 2^{\boldsymbol{n}} - \boldsymbol{1} \right\}, \\
	W_{\boldsymbol{\ell}} &= \operatorname{span}\left\{ \phi_{\boldsymbol{\ell}, \boldsymbol{i}} : \boldsymbol{1} \leq \boldsymbol{i} \leq 2^{\boldsymbol{\ell}} - \boldsymbol{1},\; i_j \text{ odd for all } j \right\}.
\end{align}
Here, \(V_{\boldsymbol{n}}\) denotes the \(d\)-dimensional nodal basis, while \(W_{\boldsymbol{\ell}}\) defines the \(d\)-dimensional hierarchical basis. The hierarchical basis functions in \(W_{\boldsymbol{\ell}}\) capture the incremental detail at resolution level \(\boldsymbol{\ell}\) and are supported on coarser or finer grids depending on \(\boldsymbol{\ell}\). These spaces satisfy the orthogonal decomposition:
\begin{equation}
	V_{\boldsymbol{n}} = \bigoplus_{\| \boldsymbol{\ell} \|_{\infty} \leq \boldsymbol{n}} W_{\boldsymbol{\ell}}.
\end{equation}

\subsubsection{Korobov Function Space}
We now introduce the Korobov function space, which will be shown to admit approximation by \(d\)-dimensional hierarchical basis functions. For \(2 \leq p \leq \infty\), the Korobov function space is defined as follows:
\begin{equation}
	X^{2,p}([0,1]^d) = \left\{ f \in L^p([0,1]^d) : f|_{\partial[0,1]^d} = 0, \, D^{\boldsymbol k} f \in L^p([0,1]^d), \, \|\boldsymbol k\|_{\infty} \leq 2 \right\}.
\end{equation}
Note that \( X^{2,p}([0,1]^d) \) is a subspace of the Sobolev space \( W^{2,p}([0,1]^d) \). For example, in two dimensions $(d=2)$ a function, $f$, belongs to \( X^{2,p}([0,1]^d) \) if and only if
\begin{equation}
	\frac{\partial f}{\partial x_1},\quad \frac{\partial f}{\partial x_2},\quad 
	\frac{\partial^2 f}{\partial x_1^2},\quad \frac{\partial^2 f}{\partial x_2^2},\quad 
	\frac{\partial^2 f}{\partial x_1 \partial x_2},\quad 
	\frac{\partial^3 f}{\partial x_1^2 \partial x_2},\quad 
	\frac{\partial^3 f}{\partial x_1 \partial x_2^2},\quad 
	\frac{\partial^4 f}{\partial x_1^2 \partial x_2^2} \in L^p([0,1]^d).
\end{equation}
Whereas $f$ belongs to \( W^{2,p}([0,1]^d) \) if and only if
\begin{equation}
	\frac{\partial f}{\partial x_1},\quad \frac{\partial f}{\partial x_2},\quad 
	\frac{\partial^2 f}{\partial x_1^2},\quad \frac{\partial^2 f}{\partial x_2^2},\quad 
	\frac{\partial^2 f}{\partial x_1 \partial x_2} \in L^p([0,1]^d).
\end{equation}
The difference arises from the multi-index conditions in the definitions of \( X^{2,p}([0,1]^d) \) and \( W^{2,p}([0,1]^d) \). Specifically, the Korobov space corresponds to the condition \( \|\boldsymbol{k}\|_\infty \leq 2 \), while the Sobolev space corresponds to \( \|\boldsymbol{k}\|_1 \leq 2 \). 

An important property of the Korobov space is that any function \( f \in X^{2,p}([0,1]^d) \) admits an infinite expansion in terms of the hierarchical basis functions. Indeed, it is a standard result that the function space
\begin{equation}
	V = \bigoplus_{\boldsymbol \ell \in \mathbb{N}^d} W_{\boldsymbol \ell}
\end{equation}
is dense in \( H^1_0([0,1]^d) \); that is, \( \overline{V} = H^1_0([0,1]^d) \), where the overline denotes the closure of \( V \) in the \( H^1 \)-norm. The inclusion
\begin{equation}
	X^{2,p}([0,1]^d) \subseteq H^1_0([0,1]^d)
\end{equation}
implies that any function \( f \in X^{2,p}([0,1]^d) \) admits an infinite expansion in terms of the hierarchical basis functions. Specifically,
\begin{equation}\label{equation:infinite-expan}
	f(\boldsymbol{x}) = \sum_{\boldsymbol{\ell}} \sum_{\boldsymbol{i} \in I_{\boldsymbol{\ell}}} v_{\boldsymbol{\ell}, \boldsymbol{i}} \, \phi_{\boldsymbol{\ell}, \boldsymbol{i}}(\boldsymbol{x}),
\end{equation}
where the index set \( I_{\boldsymbol{\ell}} \) is defined as
\begin{equation}
	I_{\boldsymbol{\ell}} = \left\{ \boldsymbol{i} \in \mathbb{N}^d \,\middle|\, \boldsymbol{1} \leq \boldsymbol{i} \leq 2^{\boldsymbol{\ell}} - \boldsymbol{1},\ \text{and } i_j \text{ is odd for each } j \right\}.
\end{equation}
In the remainder of this section, we focus on the case where \( p \in \{2, \infty\} \). This is essentially because the so-called optimal sparse grid decomposition is discussed in \cite{bungartz2004sparse} only for the case \( p = 2, +\infty \).  We have the following result:

\begin{prop}\label{bungartz2004sparse}
	(\cite[Lemma 3.3]{bungartz2004sparse})
	Let $p \in \{2,\infty\}$ and consider $X^{2,p}([0,1]^d)$ endowed with the semi-norm on $X^{2,p}([0,1]^d)$:
	\begin{equation}
		| f |_{\boldsymbol 2,p} := \left \| {\frac{\partial^{2d} f}{\partial x^2_1 \cdots \partial x^2_d}} \right \|_{L^p([0,1]^d)}.
	\end{equation}
	Each $f \in X^{2,p}([0,1]^d)$ can be written as in \eqref{equation:infinite-expan} such that
	\begin{equation}\label{equation:coefficients-integral-formula}
		v_{\boldsymbol \ell,\boldsymbol i} = \int_{[0,1]^d} 
		\prod_{j=1}^{d} \left( -2^{-(\ell_j+1)} \phi_{\ell_j,i_j}(x_j) \right) \frac{\partial^{2d} f}{\partial x_1^2 \cdots \partial x_d^2}(\boldsymbol x) \, d \boldsymbol x
	\end{equation}
	with the following bound on the coefficients:
	\begin{enumerate}
		\item[(i)] If $p = 2$, then
		\begin{equation}\label{equation:coeff-bound-L2}
			|v_{\boldsymbol \ell,\boldsymbol i}| \leq 
			2^{-d} \left ( \frac{2}{3} \right )^{d/2} 2^{- (3/2) \cdot \| \boldsymbol{\ell} \|_1 } | f \mathds{1}_{\operatorname{supp}(\phi_{\boldsymbol{\ell},\boldsymbol{i}})} |_{ \boldsymbol{2}, 2}.
		\end{equation}
		\item[(ii)] If $p = \infty$, then
		\begin{equation}\label{equation:coeff-bound-Linfinity}
			|v_{\boldsymbol \ell,\boldsymbol i}| \leq 2^{-d - 2|\boldsymbol \ell|_1} 
			| f |_{\boldsymbol 2,\infty},
		\end{equation}
		where $\operatorname{supp}(\phi_{\boldsymbol{\ell},\boldsymbol{i}})$ is the support of $\phi_{\boldsymbol{\ell},\boldsymbol{i}}$.
	\end{enumerate}
\end{prop}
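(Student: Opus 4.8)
The plan is to establish the integral representation \eqref{equation:coefficients-integral-formula} for the hierarchical surplus coefficient first, and then to read off both bounds \eqref{equation:coeff-bound-L2} and \eqref{equation:coeff-bound-Linfinity} from it via Hölder's inequality. Since $X^{2,p}([0,1]^d) \subseteq H^1_0([0,1]^d)$ and $V = \bigoplus_{\boldsymbol{\ell}} W_{\boldsymbol{\ell}}$ is dense in $H^1_0([0,1]^d)$, every $f \in X^{2,p}([0,1]^d)$ has the expansion \eqref{equation:infinite-expan}, in which $v_{\boldsymbol{\ell},\boldsymbol{i}}$ is the coefficient dual to $\phi_{\boldsymbol{\ell},\boldsymbol{i}}$ in the (non-orthogonal) hierarchical basis, i.e.\ the hierarchical surplus of $f$ at the grid point $\boldsymbol{x}_{\boldsymbol{\ell},\boldsymbol{i}}$. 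The Korobov hypothesis forces $\partial^{2d} f/\partial x_1^2 \cdots \partial x_d^2 \in L^p([0,1]^d)$ with $p \geq 2$, so by standard Sobolev embedding $f$ is continuous and the relevant pointwise quantities are well defined.

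Next I would prove \eqref{equation:coefficients-integral-formula}, and it suffices to do the one-dimensional case and then tensorize. Fix a level $\ell$ and an odd index $i$, set $h_\ell = 2^{-\ell}$ and $x_i = i h_\ell$. The hat function $\phi_{\ell,i}$ from \eqref{hat-func} is piecewise linear, supported strictly inside $[0,1]$, with slope $h_\ell^{-1}$ on $[x_i - h_\ell, x_i]$ and $-h_\ell^{-1}$ on $[x_i, x_i + h_\ell]$, so its distributional second derivative is concentrated at the three knots with weights $h_\ell^{-1}$, $-2h_\ell^{-1}$, $h_\ell^{-1}$. Integrating by parts twice (all boundary terms vanish since $\phi_{\ell,i}$ and $\phi_{\ell,i}'$ vanish near $0$ and $1$) gives
\begin{equation*}
  \int_0^1 \phi_{\ell,i}(x)\, f''(x)\, dx = \frac{1}{h_\ell}\bigl( f(x_i - h_\ell) - 2 f(x_i) + f(x_i + h_\ell)\bigr),
\end{equation*}
hence $-2^{-(\ell+1)} \int_0^1 \phi_{\ell,i}\, f'' = f(x_i) - \tfrac12\bigl( f(x_i - h_\ell) + f(x_i + h_\ell)\bigr)$; since $i$ is odd, the neighbours $x_i \pm h_\ell$ lie on coarser grids or on $\partial[0,1]$ where the coarse nodal interpolant is linear through the two neighbours, so the right-hand side is exactly the hierarchical surplus $v_{\ell,i}$. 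For $d > 1$ the basis function factorizes as $\phi_{\boldsymbol{\ell},\boldsymbol{i}}(\boldsymbol{x}) = \prod_j \phi_{\ell_j,i_j}(x_j)$, and applying Fubini together with the one-dimensional identity successively in each coordinate produces the product kernel and the full mixed derivative $\partial^{2d} f/\partial x_1^2\cdots\partial x_d^2$ of \eqref{equation:coefficients-integral-formula}.

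With \eqref{equation:coefficients-integral-formula} in hand the estimates are short. For $p = 2$, restrict the integral to $\operatorname{supp}(\phi_{\boldsymbol{\ell},\boldsymbol{i}})$ and apply Cauchy--Schwarz to obtain
\begin{equation*}
  |v_{\boldsymbol{\ell},\boldsymbol{i}}| \leq \Bigl\| \prod_{j=1}^d 2^{-(\ell_j+1)} \phi_{\ell_j,i_j}\Bigr\|_{L^2([0,1]^d)} \; \bigl| f \mathds{1}_{\operatorname{supp}(\phi_{\boldsymbol{\ell},\boldsymbol{i}})}\bigr|_{\boldsymbol{2}, 2},
\end{equation*}
and then use $\prod_j 2^{-(\ell_j+1)} = 2^{-d} 2^{-\|\boldsymbol{\ell}\|_1}$ together with the elementary computation $\| \phi_{\ell,i}\|_{L^2([0,1])}^2 = 2^{-\ell}\int_{-1}^1 (1-|t|)^2\, dt = \tfrac23 2^{-\ell}$, which contributes the factor $(2/3)^{d/2} 2^{-\|\boldsymbol{\ell}\|_1/2}$ and yields \eqref{equation:coeff-bound-L2}. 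For $p = \infty$, apply Hölder with the $L^1$--$L^\infty$ pairing, $|v_{\boldsymbol{\ell},\boldsymbol{i}}| \leq \bigl\| \prod_j 2^{-(\ell_j+1)} \phi_{\ell_j,i_j}\bigr\|_{L^1([0,1]^d)} | f |_{\boldsymbol{2},\infty}$, and use $\int_0^1 \phi_{\ell,i} = 2^{-\ell}$ so that the $L^1$ norm equals $2^{-d} 2^{-2\|\boldsymbol{\ell}\|_1}$, which is \eqref{equation:coeff-bound-Linfinity}.

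The main obstacle is the rigorous justification of the integration-by-parts step, not the estimates: one must handle the distributional second derivative of the piecewise-linear hat function (equivalently, split the integral at the knots and track the jump terms), check that all boundary contributions vanish for $f$ in the Korobov/Sobolev class, and propagate this through the $d$-fold tensor product by Fubini; one should also confirm that the coefficients in the a priori only $H^1$-convergent expansion \eqref{equation:infinite-expan} genuinely coincide with the hierarchical-surplus functional applied to $f$ (for instance by identifying $v_{\boldsymbol{\ell},\boldsymbol{i}}$ through the nested nodal interpolants on $\mathcal{P}_{\boldsymbol{n}}$ and passing to the limit). Once these points are in place, the Cauchy--Schwarz/Hölder bounds and the two one-dimensional integrals $\int \phi_{\ell,i}^2$ and $\int \phi_{\ell,i}$ are entirely routine.
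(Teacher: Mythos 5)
The paper offers no proof of this proposition---it is quoted directly from \cite[Lemma 3.3]{bungartz2004sparse}---and your reconstruction is precisely the standard argument given there: integrate by parts twice against the piecewise-linear hat function (whose distributional second derivative is the $h_\ell^{-1}(\delta_{x_i-h_\ell}-2\delta_{x_i}+\delta_{x_i+h_\ell})$ stencil, identifying $v_{\boldsymbol\ell,\boldsymbol i}$ with the hierarchical surplus), tensorize via Fubini, and then apply Cauchy--Schwarz with $\|\phi_{\ell,i}\|_{L^2}^2=\tfrac23 2^{-\ell}$ for $p=2$ and the $L^1$--$L^\infty$ pairing with $\|\phi_{\ell,i}\|_{L^1}=2^{-\ell}$ for $p=\infty$. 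Your computation is correct, the constants match both stated bounds, and you correctly flag the only genuinely delicate points (justifying the distributional integration by parts and identifying the expansion coefficients with the surplus functional), so nothing further is needed.
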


Note that the coefficients \(v_{\boldsymbol{\ell}, \boldsymbol{i}}\) exhibit exponential decay in the level index, satisfying \(\|v_{\boldsymbol{\ell}, \boldsymbol{i}}\| = \mathcal{O}(2^{-2|\boldsymbol{\ell}|_1})\). This behavior arises because a function in the Korobov space involves mixed second-order derivatives in all spatial directions. In contrast, the dimension of the subspace \(W_{\boldsymbol{\ell}}\) at level \(\boldsymbol{\ell}\) grows exponentially as \(\mathcal{O}(2^{|\boldsymbol{\ell}|_1})\). An optimization procedure, as described in \cite{bungartz2004sparse}, balances the trade-off between the number of degrees of freedom and the resulting approximation error. This leads to the sparse grid decomposition of level \(n \in \mathbb{N}\):
\begin{equation}\label{equation:sparse-grid}
    V^{s}_{n} = \bigoplus_{\| \boldsymbol{\ell} \|_1 \leq n+d-1} W_{\boldsymbol{\ell}}.
\end{equation}
The sparse grid decomposition has the following properties:

\begin{prop}\label{prop:sparse-grid-korobov}
	Let $d,n \in \mathbb N, \; \epsilon > 0$ and $p \in \{2, \infty\}$. Consider the sparse grid decomposition as in \eqref{equation:sparse-grid}. Then:
	\begin{enumerate} 
		\item[(i)] (\cite[Lemma 3.6]{bungartz2004sparse}) The number of grid points, $N$, is $N = \mathcal{O}(2^n n^{d-1})$.
		\item[(ii)] (\cite[Lemma 3.13]{bungartz2004sparse}) Any $f \in X^{2,p}([0,1]^d)$ can be approximated by
		\begin{equation}\label{sparse-approx-poly}
			f_n^{s} (\boldsymbol{x}) = \sum_{\substack{ \| \boldsymbol{\ell} \|_1 \leq n + d - 1 \\ \boldsymbol{i} \in {I}_{\boldsymbol{\ell}}}}v_{\boldsymbol{\ell},\boldsymbol{i}} \phi_{\boldsymbol \ell,\boldsymbol i}(\boldsymbol x) \; \in \; V^{s}_{n}
		\end{equation}
		such that with respect to the $\| \cdot \|_{L^p([0,1]^d)}$ norm, we have
		\begin{equation}
			\| f - f^{s}_n \|_{L^p([0,1]^d)} = \mathcal{O}(N^{-2} \log^{3(d-1)}_2 N).
		\end{equation}
		For every $\epsilon \in (0,1)$, it suffices to choose 
		$N = \mathcal{O}(\epsilon^{-\frac{1}{2}} \log_2 (1/\epsilon)^{\frac{3}{2}(d-1)})$ in order to ensure that
		\begin{equation}
			\| f - f^{s}_n \|_{L^p([0,1]^d)} \leq \epsilon.
		\end{equation}
	\end{enumerate}
\end{prop}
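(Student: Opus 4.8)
The plan is to obtain both items by inserting the coefficient estimates of \cref{bungartz2004sparse} into an elementary count of degrees of freedom and a geometric-sum bound for the hierarchical tail, and then to invert the resulting relation between the grid-point count $N$ and the accuracy $\epsilon$. In effect I would re-derive \cite[Lemmas 3.6 and 3.13]{bungartz2004sparse} from the coefficient bounds already quoted, and then append the one genuinely new observation, namely the size of $N$ needed to reach error $\epsilon$.

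First I would record the dimension bookkeeping. Since the level-$\ell$ hierarchical functions $\phi_{\ell,i}$ range over the odd indices $i\in\{1,3,\dots,2^\ell-1\}$, one has $\dim W_{\boldsymbol\ell}=\prod_{j=1}^d 2^{\ell_j-1}=2^{\|\boldsymbol\ell\|_1-d}$, and the number of $\boldsymbol\ell\in\mathbb N^d$ with $\boldsymbol\ell\geq\boldsymbol 1$ and $\|\boldsymbol\ell\|_1=d+m$ equals $\binom{m+d-1}{d-1}$. Summing over $\|\boldsymbol\ell\|_1\leq n+d-1$ gives $N=\sum_{m=0}^{n-1}\binom{m+d-1}{d-1}2^m$, and because $\binom{m+d-1}{d-1}=\Theta(m^{d-1})$ the $m=n-1$ term dominates, so $N=\Theta(2^n n^{d-1})$, which is part (i). For part (ii) I would expand $f-f^s_n=\sum_{\|\boldsymbol\ell\|_1>n+d-1}\sum_{\boldsymbol i\in I_{\boldsymbol\ell}}v_{\boldsymbol\ell,\boldsymbol i}\,\phi_{\boldsymbol\ell,\boldsymbol i}$ and estimate the tail level by level. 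When $p=\infty$, at a fixed point exactly one $\boldsymbol i$ contributes at each level and $|\phi_{\boldsymbol\ell,\boldsymbol i}|\leq 1$, so \eqref{equation:coeff-bound-Linfinity} yields $\|f-f^s_n\|_{L^\infty}\leq |f|_{\boldsymbol 2,\infty}\sum_{m\geq n}\binom{m+d-1}{d-1}2^{-3d-2m}=\mathcal{O}(2^{-2n}n^{d-1})$. When $p=2$, the supports of $\{\phi_{\boldsymbol\ell,\boldsymbol i}\}_{\boldsymbol i\in I_{\boldsymbol\ell}}$ tile $[0,1]^d$ with pairwise intersections of measure zero, so $\big\|\sum_{\boldsymbol i}v_{\boldsymbol\ell,\boldsymbol i}\phi_{\boldsymbol\ell,\boldsymbol i}\big\|_{L^2}^2=\sum_{\boldsymbol i}v_{\boldsymbol\ell,\boldsymbol i}^2\|\phi_{\boldsymbol\ell,\boldsymbol i}\|_{L^2}^2$ with $\|\phi_{\boldsymbol\ell,\boldsymbol i}\|_{L^2}^2=(2/3)^d 2^{-\|\boldsymbol\ell\|_1}$ and $\sum_{\boldsymbol i}|f\mathds 1_{\operatorname{supp}(\phi_{\boldsymbol\ell,\boldsymbol i})}|_{\boldsymbol 2,2}^2=|f|_{\boldsymbol 2,2}^2$; substituting \eqref{equation:coeff-bound-L2} and applying the triangle inequality over levels gives the same $\mathcal{O}(2^{-2n}n^{d-1})$ bound. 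Finally, using $N=\Theta(2^n n^{d-1})$, hence $2^{-2n}=\Theta(n^{2(d-1)}/N^2)$ and $n=\Theta(\log_2 N)$, converts this to $\|f-f^s_n\|_{L^p}=\mathcal{O}(N^{-2}\log_2^{3(d-1)}N)$.

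For the last assertion, let $C_0$ be the implied constant in $\|f-f^s_n\|_{L^p}\leq C_0 N^{-2}\log_2^{3(d-1)}N$, and choose the level $n$ so that its grid-point count satisfies $N\geq \sqrt{C_0}\,\epsilon^{-1/2}\log_2^{\frac32(d-1)}(1/\epsilon)$; this is possible with $N=\mathcal{O}\!\big(\epsilon^{-1/2}\log_2^{\frac32(d-1)}(1/\epsilon)\big)$ because consecutive values of $\dim V^s_n$ differ by at most a constant factor. Then $N^{-2}\leq C_0^{-1}\epsilon\,\log_2^{-3(d-1)}(1/\epsilon)$, while $\log_2 N=\tfrac12\log_2(1/\epsilon)+\tfrac32(d-1)\log_2\log_2(1/\epsilon)+\mathcal{O}(1)\leq \log_2(1/\epsilon)$ once $\epsilon$ is below a threshold depending only on $d$, so $\log_2^{3(d-1)}N\leq\log_2^{3(d-1)}(1/\epsilon)$ and therefore $\|f-f^s_n\|_{L^p}\leq C_0\cdot C_0^{-1}\epsilon\,\log_2^{-3(d-1)}(1/\epsilon)\cdot\log_2^{3(d-1)}(1/\epsilon)=\epsilon$; enlarging the constant in the $\mathcal{O}$ absorbs the finitely many values of $\epsilon$ below the threshold.

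The step I expect to be the main obstacle is this last inversion: the $\log_2 N$ factor in the error bound is self-referential, so one must check that choosing $N$ of order $\epsilon^{-1/2}\operatorname{polylog}(1/\epsilon)$ does not inflate $\log_2^{3(d-1)}N$ beyond $\log_2^{3(d-1)}(1/\epsilon)$, which reduces to $\log_2\log_2(1/\epsilon)=o(\log_2(1/\epsilon))$ together with careful passage between the integer level $n$ and the real parameters $N$ and $\epsilon$. If instead one re-derives the error bound rather than quoting \cite[Lemma 3.13]{bungartz2004sparse}, the $p=2$ case is the more delicate one, since it simultaneously uses the $L^2$-orthogonality from the measure-zero overlap of the level-$\boldsymbol\ell$ supports and the identity matching the localized semi-norms $|f\mathds 1_{\operatorname{supp}(\phi_{\boldsymbol\ell,\boldsymbol i})}|_{\boldsymbol 2,2}$ to the global semi-norm $|f|_{\boldsymbol 2,2}$.
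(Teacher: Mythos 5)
Your proposal is correct, and it essentially reconstructs the standard argument from the cited source: the paper itself gives no proof of this proposition, deferring parts (i) and (ii) to \cite[Lemmas 3.6 and 3.13]{bungartz2004sparse} and stating the choice of $N$ in terms of $\epsilon$ without further justification. Your degrees-of-freedom count, the level-by-level tail estimates built on the coefficient bounds of \cref{bungartz2004sparse} (disjoint supports within a level for both the $L^\infty$ pointwise bound and the $L^2$ orthogonality/tiling identity), and the careful inversion showing $\log_2^{3(d-1)}N\leq\log_2^{3(d-1)}(1/\epsilon)$ for small $\epsilon$ all check out and match the route the paper implicitly relies on.
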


Traditional numerical methods that discretize each coordinate axis using \( n \) points result in a total of \( \mathcal{O}(n^d) \) grid points. Sparse grid techniques address this challenge by selecting a structured subset of the full tensor-product basis, thereby significantly reducing the number of degrees of freedom with only a modest loss in accuracy, as made precise in \cref{prop:sparse-grid-korobov}.


\subsubsection{Approximation Rates}
We now present our main result, which provides a rigorous complexity analysis of the proposed quantum circuit for approximating functions in the Korobov space \(X^{2,p}([0,1]^d)\). We begin with the cases \(p \in \{2, \infty\}\). Recall that each one-dimensional hierarchical basis function is obtained by scaling and translating the fundamental \emph{hat function} \(\phi(x)\). A key observation, which is both simple and central to our construction, is that the function \(1 - |x|\) can be expressed as a linear combination of Chebyshev polynomials of degrees 0 and 1. Specifically, for \(x \in [-1,1]\), we have
\begin{equation}
	1 \pm x = P_0(x) \pm P_1(x),
\end{equation}
where \( P_0 \) and \( P_1 \) are the Chebyshev polynomials of the first kind of degree \( 0 \) and \( 1 \), respectively. Using this identity, we can express the hat function \( \phi(x) \) as:
\begin{equation}\label{equation:hat-decomposition-chebyshev}
	\phi(x) = [P_0(x) + P_1(x)] \mathds{1}_{[-1,0]}(x)  + [P_0(x) - P_1(x)] \mathds{1}_{[0,1]}(x).
\end{equation}
Based on the discussion in the preceding subsection, any function \( f \in X^{2,p}([0,1]^d) \), for \( p \in \{2, \infty\} \), admits an approximation of the form \( f_n^{s}(\boldsymbol{x}) \) as defined in \eqref{sparse-approx-poly}. For each \( \boldsymbol{x} \in [0,1]^d \), we have
\begin{align}\label{align:function-to-be-approximated}
	f_n^{s}(\boldsymbol x) & =
	\sum_{\substack{\| \boldsymbol{\ell} \|_1 \leq n + d - 1 \\ \boldsymbol{i} \in {I}_{\boldsymbol{\ell}}}} 
	\sum_{k_1, \cdots, k_d \in \{0,1\} }
	(-1)^{\sum_{j=1}^d k_j ( \text{sgn} (x_j - x_{i_j}) + 1)/2} 
	v_{\boldsymbol \ell,\boldsymbol i} 
	\prod_{j=1}^d
	P_{k_j} \left ( \frac{x_j - x_{i_j} }{h_{\ell_j}} \right )   \\
	& :=
	\sum_{\substack{\| \boldsymbol{\ell} \|_1 \leq n + d - 1 \\ \boldsymbol{i} \in {I}_{\boldsymbol{\ell}}}} 
	\sum_{k_1, \cdots, k_d \in \{0,1\} }
	w_{\boldsymbol \ell,\boldsymbol i} 
	\prod_{j=1}^d
	P_{k_j} \left( \frac{x_j - x_{i_j} }{h_{\ell_j}} \right),
\end{align}
where we have defined
\begin{equation}\label{w-coeffs}
	w_{\boldsymbol \ell,\boldsymbol i} = (-1)^{\sum_{j=1}^d k_j ( \text{sgn} (x_j - x_{i_j}) + 1)/2} 
	v_{\boldsymbol \ell,\boldsymbol i}.
\end{equation}

\begin{remark}
Before discussing the implementation of the LCU subroutine, we address an important point. The sign of the coefficients \(w_{\boldsymbol{\ell}, \boldsymbol{i}}\) in \eqref{w-coeffs} is determined by
\begin{equation}\label{sign-caveat}
    (-1)^{\sum_{j=1}^d k_j (\operatorname{sgn}(x_j - x_{i_j}) + 1)/2}.
\end{equation}
A priori, this expression depends on the input vector \(\boldsymbol{x} = (x_1, \dots, x_d)\). However, this dependence does not pose a significant issue. The partition points define a decomposition of \([0,1]^d\) into hypercubes, with the leftmost corner of each hypercube corresponding to a partition point. Given \(\boldsymbol{x} \in [0,1]^d\), a quantum circuit can be designed to identify the hypercube containing \(\boldsymbol{x}\). If \(\delta > 0\) is an additional precision parameter, the complexity of this circuit is \(\mathcal{O}(\delta \log (1/\delta))\); see \cite[Lemma S14]{yu2024provable} for a detailed argument and tighter bounds. Once the relevant hypercube is identified, all expressions in \eqref{w-coeffs} can be computed efficiently in \(\mathcal{O}(1)\) time.  We do not include this complexity in our results, as it is considered a pre-processing step performed once the input \(\boldsymbol{x} \in [0,1]^d\) is known, prior to executing the quantum circuit. The focus of this paper is the approximation theory of quantum circuits and their complexity in approximating functions, rather than practical pre-processing concerns, although these will be important for real-world implementation.
\end{remark}

We now discuss our main result:

\begin{prop}\label{prop:pqc-implementation-f-sparse}
	Let $d,n \in \mathbb N$ and $p \in \{2,\infty\}$. For each $f \in X^{2,p}([0,1]^d)$, let $f_{n}^{s}(\boldsymbol{x})$ be as in \eqref{align:function-to-be-approximated}.
	There exists a quantum circuit \( U_f(\boldsymbol{x}) \) such that if a measurement corresponding to the Pauli \( \sigma_z \) observable is made only on the first qubit, then
	\begin{equation}
		\bra{0} U_f^\dagger(\boldsymbol x) Z^{(1)} U_f(\boldsymbol x)  \ket{0} = f_n^s(\boldsymbol{x}).
	\end{equation}
	The quantum circuit $U_f(\boldsymbol x)$ has depth $\mathcal{O}(d 2^{n+d} n^{d-1}((n+d) + (d-1)\log_2 n))$ and width $\mathcal{O}(2d + n +(d-1) \log_2 n)$.
\end{prop}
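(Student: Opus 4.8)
The plan is to observe that, once it is written in the form \eqref{align:function-to-be-approximated}, the target $f_n^s(\boldsymbol x)$ is \emph{already} a linear combination of products of univariate Chebyshev polynomials of degree at most $1$ — the $j$-th factor being $T_{k_j}$ evaluated at the rescaled local coordinate $(x_j-x_{i_j})/h_{\ell_j}$ — so it falls directly under the construction of \cref{pqc-lcu-cheb}. Granting that reduction, the rest of the proof is bookkeeping: count the number $M$ of product terms, bound the total Chebyshev degree appearing across all terms, substitute these into the depth and width estimates of \cref{pqc-lcu-cheb}, and finally wrap the LCU circuit in the Hadamard test to obtain the claimed $Z^{(1)}$-expectation identity.

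To carry this out, fix $\boldsymbol x\in[0,1]^d$ and enumerate by a single index $m=1,\dots,M$ the triples $(\boldsymbol\ell,\boldsymbol i,\boldsymbol k)$ with $\|\boldsymbol\ell\|_1\le n+d-1$, $\boldsymbol i\in I_{\boldsymbol\ell}$, and $\boldsymbol k\in\{0,1\}^d$, so that $f_n^s(\boldsymbol x)=\sum_{m=1}^M a_m\prod_{j=1}^d T_{k_{m,j}}(y_{m,j})$ with $k_{m,j}=k_j\in\{0,1\}$, $y_{m,j}=(x_j-x_{i_j})/h_{\ell_j}$, and $a_m=w_{\boldsymbol\ell,\boldsymbol i}$. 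By the remark preceding the statement, once $\boldsymbol x$ is fixed the sign factor \eqref{sign-caveat} in \eqref{w-coeffs} — together with the vanishing of any term whose hat-function support misses $\boldsymbol x$ — is a concrete number produced by a one-time pre-processing step; thus each retained $a_m$ is an explicit real constant (whose sign, if negative, is absorbed into the corresponding branch unitary exactly as in the proof of \cref{pqc-lcu-cheb}) and each retained argument $y_{m,j}$ lies in $[-1,1]$, which is all that \cref{lem:qsp-1d-chebyshev} requires. The only cosmetic departure from \cref{pqc-lcu-cheb} is that the Chebyshev polynomials here act on per-term rescaled arguments rather than a shared variable; this changes nothing, since the controlled unitary $U_c$ of the LCU circuit simply applies, on the branch $\ket{m}$, the tensor product $\bigotimes_{j=1}^d U_{k_{m,j}}(y_{m,j})$ of single-qubit QSP blocks from \cref{lem:qsp-1d-chebyshev}, each a fixed gate once $\boldsymbol x$ is known.

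For the count, \cref{prop:sparse-grid-korobov}(i) gives $N=\mathcal O(2^n n^{d-1})$ points in the sparse grid $V_n^s$, i.e. $\mathcal O(2^n n^{d-1})$ pairs $(\boldsymbol\ell,\boldsymbol i)$; each contributes $2^d$ sign patterns $\boldsymbol k$, so the LCU sum comprises at most $M=\mathcal O(2^d N)=\mathcal O(2^{n+d}n^{d-1})$ terms and $\log_2 M=\mathcal O((n+d)+(d-1)\log_2 n)$. Since every degree obeys $k_{m,j}\le 1$, the quantity $\|\boldsymbol n\|_1=\sum_{m}\sum_{j}k_{m,j}$ entering \cref{pqc-lcu-cheb} is at most $dM$, hence $2\|\boldsymbol n\|_1+dM=\mathcal O(dM)$. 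Substituting into the depth bound $\mathcal O((2\|\boldsymbol n\|_1+dM)\log_2 M)$ and width bound $\mathcal O(d+\log_2 M)$ of \cref{pqc-lcu-cheb} yields depth $\mathcal O(d\,2^{n+d}n^{d-1}((n+d)+(d-1)\log_2 n))$ and width $\mathcal O(2d+n+(d-1)\log_2 n)$. Applying the Hadamard test to a single ancilla, as in the proof of \cref{pqc-lcu-cheb}, produces $U_f(\boldsymbol x)$ with $\bra{0}U_f^\dagger(\boldsymbol x)Z^{(1)}U_f(\boldsymbol x)\ket{0}=\bra{0}U_{\mathrm{LCU}}(\boldsymbol x)\ket{0}=f_n^s(\boldsymbol x)$, the extra operations costing only $\mathcal O(1)$.

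The step I expect to demand the most care is the reduction itself rather than any individual estimate: one must pin down the right index set, verify that the number of product terms is governed by the \emph{sparse}-grid cardinality $N=\mathcal O(2^n n^{d-1})$ rather than the full tensor-grid cardinality $\mathcal O(2^{nd})$, and cleanly relegate all the $\boldsymbol x$-dependence of the sign and support factors to the pre-processing step so that \cref{pqc-lcu-cheb} applies with fixed scalar coefficients and with every QSP argument in $[-1,1]$. Once that is set up, the depth and width follow immediately from $k_{m,j}\le 1$ and $M=\mathcal O(2^d N)$.
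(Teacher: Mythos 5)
Your proposal is correct and follows essentially the same route as the paper: reduce $f_n^s$ to a linear combination of $M=\mathcal{O}(2^{n+d}n^{d-1})$ products of degree-$\le 1$ Chebyshev blocks, invoke \cref{pqc-lcu-cheb} with $2\|\boldsymbol n\|_1+dM=\mathcal{O}(dM)$, and wrap with the Hadamard test. Your extra care about the per-term rescaled arguments and the pre-processing of the sign/support factors is a welcome elaboration of points the paper delegates to a remark, but it is not a different argument.
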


\begin{proof}
	Consider the operator
	\begin{equation}\label{equation:korobov-lcu}
		U_f(\boldsymbol x) = \sum_{\substack{\| \boldsymbol{\ell} \|_1 \leq n + d - 1 \\ \boldsymbol{i} \in {I}_{\boldsymbol{\ell}}}} 
		\sum_{k_1,\cdots,k_d \in \{0,1\}}
		w_{\boldsymbol \ell, \boldsymbol i} 
		\underbrace
		{\bigotimes_{j=1}^d U_{k_j} \left(\frac{x_j - x_{i_j}}{h_{\ell_j}} \right)}_{U_{\boldsymbol i, \boldsymbol{\ell}}(\boldsymbol x)}.
	\end{equation}
	From \cref{prop:sparse-grid-korobov}, the number of terms in the outer summation of \eqref{equation:korobov-lcu} is \( \mathcal{O}(2^n n^{d-1}) \). Hence, the total number of terms is \( M = \mathcal{O}(2^{n+d} n^{d-1}) \). As a result, \cref{pqc-lcu-cheb} implies that the width of $U_f(\boldsymbol{x})$ is
	\begin{equation}
		\mathcal{O}(d+ \log_2 M) = \mathcal{O}(2d + n +(d-1) \log_2 n ).
	\end{equation}
	Furthermore, \cref{lem:qsp-1d-chebyshev} establishes that each unitary operator \( U_{k_j} \) can be implemented using a quantum circuit with width and depth \( \mathcal{O}(1) \), as it corresponds to Chebyshev polynomials of degree at most one. Therefore, the \( \| \cdot \|_1 \)-norm term appearing in the depth of the quantum circuit in \cref{lem:qsp-1d-chebyshev} scales as \( \mathcal{O}(dM) \). As a result, \cref{pqc-lcu-cheb} implies that the depth of $U_f(\boldsymbol{x})$ is
	\begin{equation}
		\mathcal{O}(3dM \log_2 M) = \mathcal{O}(d 2^{n+d} n^{d-1}((n+d) + (d-1)\log_2 n)).
	\end{equation}
	This completes the proof.
\end{proof}

The significance of \cref{prop:pqc-implementation-f-sparse} lies in the provision of an explicit quantum circuit, accompanied by corresponding complexity estimates, for approximating functions in \( X^{2,p}([0,1]^d) \) where \( p \in \{2, \infty\} \). From this, we can now derive the worst-case parameters required to approximate the quantum circuit to arbitrary accuracy.

\begin{prop}\label{prop:approximation-error}
	Let $d \in \mathbb N$ and $\epsilon \in (0,1)$. For each $f \in X^{2,p}([0,1]^d)$ such that $p \in \{2, \infty \}$, there exists a quantum circuit $U_{f,\epsilon}(\boldsymbol x)$ such that if a measurement corresponding to the Pauli \( \sigma_z \) observable is made only on the first qubit, then
	\begin{equation}
		\bra{0} U_{f,\epsilon}^\dagger(\boldsymbol x) Z^{(1)} U_{f,\epsilon} (\boldsymbol x)  \ket{0} = f_n^s(\boldsymbol{x})
	\end{equation}
	such that
	\begin{equation}\label{error-estimate}
		\| f - f_n^s \|_{L^p([0,1]^d)} \leq \epsilon.
	\end{equation}
	The complexity of $U_{f,\epsilon}(\boldsymbol x)$ is characterized as follows:
	\begin{enumerate}
		\item[(i)] The depth of $U_{f,\epsilon} (\boldsymbol x)$ is 
		$\mathcal{O}\left(  
		\frac{d^2 \left(2 \log_2^{3/2}(1/\epsilon)\right)^d}{\epsilon^{1/2} \log_2^{3/2}(1/\epsilon)}
		W \left( \frac{\epsilon^{-1/d}}{d} \log_2^{3/2} \left( \frac{1}{\epsilon} \right) \right)
		\right)$.
		\item[(ii)] The width of $U_{f,\epsilon} (\boldsymbol x)$ is $\mathcal{O} \left( 2d + d W \left( \frac{\epsilon^{- \frac{1}{d}}}{d} \log_2^{\frac{3}{2}} \left( \frac{1}{\epsilon} \right) \right) \right)$.
	\end{enumerate}
\end{prop}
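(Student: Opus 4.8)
The plan is to combine the complexity estimates of \cref{prop:pqc-implementation-f-sparse} with the approximation rate of \cref{prop:sparse-grid-korobov}(ii), and then express everything in terms of $\epsilon$ by inverting the relationship between the sparse grid level $n$ and the target accuracy. First I would use \cref{prop:sparse-grid-korobov}(ii), which tells us that choosing $N = \mathcal{O}\!\left(\epsilon^{-1/2} \log_2^{3(d-1)/2}(1/\epsilon)\right)$ grid points is enough to guarantee $\|f - f_n^s\|_{L^p([0,1]^d)} \le \epsilon$. The quantum circuit $U_{f,\epsilon}(\boldsymbol x)$ is then just the circuit $U_f(\boldsymbol x)$ of \cref{prop:pqc-implementation-f-sparse} for this choice of level $n$, so the identity $\bra{0} U_{f,\epsilon}^\dagger Z^{(1)} U_{f,\epsilon} \ket{0} = f_n^s(\boldsymbol x)$ and the error bound \eqref{error-estimate} are immediate.

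The core of the argument is translating the bounds $\mathcal{O}(d 2^{n+d} n^{d-1}((n+d)+(d-1)\log_2 n))$ for the depth and $\mathcal{O}(2d + n + (d-1)\log_2 n)$ for the width into bounds in $\epsilon$. Since $N = \Theta(2^n n^{d-1})$ from \cref{prop:sparse-grid-korobov}(i), the depth is $\mathcal{O}(d\, 2^d N ((n+d)+(d-1)\log_2 n))$, so I need to bound $2^n$, $n$, and $n^{d-1}$ individually in terms of $\epsilon$. Writing $N = c\,\epsilon^{-1/2}\log_2^{3(d-1)/2}(1/\epsilon)$ and $N = 2^n n^{d-1}$, I would take logarithms: $n + (d-1)\log_2 n = \log_2 N$, which is a transcendental equation for $n$. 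The natural way to solve it is via the Lambert $W$ function: from $2^n n^{d-1} = N$ one gets, after a substitution $n = (d-1)\log_2 m$ type manipulation (or more directly $n\ln 2 = \ln N - (d-1)\ln n$), an expression $n = \frac{d-1}{\ln 2} W\!\left(\frac{\ln 2}{d-1} N^{1/(d-1)}\right)$ when $d \ge 2$ (with the $d=1$ case handled separately, where $n = \log_2 N$). Substituting the explicit $N$ and simplifying $N^{1/(d-1)} = c^{1/(d-1)} \epsilon^{-1/(2(d-1))} \log_2^{3/2}(1/\epsilon)$ should produce the $W$-argument $\frac{\epsilon^{-1/d}}{d}\log_2^{3/2}(1/\epsilon)$ appearing in the statement, up to the constant and exponent adjustments absorbed into the $\mathcal{O}(\cdot)$ — here I would be somewhat loose, using $d-1 \le d$ and similar monotonicity bounds to match the clean form stated, since the $W$-argument only needs to be correct up to the behavior that controls the asymptotics.

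For the width, $2d + n + (d-1)\log_2 n \le 2d + d\cdot n$ roughly (using $\log_2 n \le n$), and substituting the $W$-expression for $n$ gives the claimed $\mathcal{O}(2d + dW(\cdots))$. For the depth, I would bound $2^n = N / n^{d-1} \le N$ and $(n+d) + (d-1)\log_2 n = \mathcal{O}(\log_2 N) = \mathcal{O}(\log_2(1/\epsilon))$ up to the $d$-dependence hidden in the polylog, then note $2^d N = \mathcal{O}(2^d \epsilon^{-1/2}\log_2^{3(d-1)/2}(1/\epsilon))$, and rewrite $2^d \log_2^{3(d-1)/2}(1/\epsilon)$ as $(2\log_2^{3/2}(1/\epsilon))^d / \log_2^{3/2}(1/\epsilon)$, which is exactly the prefactor in part (i). The leftover $(n+d)+(d-1)\log_2 n$ factor I would bound by $\mathcal{O}(d\, W(\cdots))$ to produce the final form, matching the stated expression with its extra factor of $d$ and one $W$.

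The main obstacle is the bookkeeping in the Lambert-$W$ inversion: getting from $2^n n^{d-1} = N$ to a clean closed form for $n$ requires care with the branch of $W$, the $d=1$ degenerate case, and the fact that the stated $W$-argument $\frac{\epsilon^{-1/d}}{d}\log_2^{3/2}(1/\epsilon)$ is a simplified surrogate rather than the literal solution — so I must justify that replacing the exact argument by this one only changes things by a constant factor in the relevant asymptotic regime (this uses that $W$ is increasing and $W(\Theta(x)) = \Theta(W(x))$, together with $d-1$ vs.\ $d$ comparisons). Everything else is routine substitution and absorbing constants into $\mathcal{O}(\cdot)$.
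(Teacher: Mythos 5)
Your proposal follows the paper's proof essentially verbatim: the circuit is the one from \cref{prop:pqc-implementation-f-sparse}, the accuracy is guaranteed by the same choice $N = \mathcal{O}(\epsilon^{-1/2}\log_2^{3(d-1)/2}(1/\epsilon))$ from \cref{prop:sparse-grid-korobov}, and the level $n$ is recovered by the same Lambert-$W$ inversion of $N = \Theta(2^n n^{d-1})$, with the $d-1$ versus $d$ adjustments absorbed asymptotically exactly as the paper does. The one slip is in the width bookkeeping: bounding $(d-1)\log_2 n$ by $(d-1)n$ and then substituting $n = \Theta\bigl(d\,W(\cdots)\bigr)$ yields $\mathcal{O}\bigl(2d + d^2 W(\cdots)\bigr)$, a factor of $d$ worse than stated, whereas the paper instead argues that $2d + n(d) + (d-1)\log_2 n(d) = \Theta(2d + n(d))$, so that the width is $\mathcal{O}\bigl(2d + d\,W(\cdots)\bigr)$ without the extra factor.
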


\begin{proof}
	Let $U_{f,\epsilon}(\boldsymbol{x})$ be the unitary operator considered in \cref{prop:pqc-implementation-f-sparse}, and let $f_{n}^{s}(\boldsymbol{x})$ be as in \eqref{align:function-to-be-approximated}. Using the definition of Lambert's $W$ function, we have the following estimate:
	\begin{equation}\label{lamber}
		N = \mathcal{O}(2^n n^{d-1})
		\quad \Longleftrightarrow \quad
		n = \mathcal{O} 
		\left ( 
		\frac{d-1}{\ln 2} W \left( \frac{ \ln 2 N^{\frac{1}{d-1}}}{d-1} \right)
		\right )
		= 
		\Theta 
		\left ( 
		d W \left( \frac{ N^{\frac{1}{d}}}{d} \right)
		\right ).
	\end{equation}
	The last estimate above follows since
	\begin{equation}
		\lim_{d \rightarrow \infty} \frac{(d-1) W\left(\frac{c^{1/(d-1)}}{d-1}\right)}{d W\left(\frac{c^{1/d}}{d}\right)} 
		= \lim_{d \rightarrow \infty} \frac{d-1}{d} \cdot 
		\lim_{d \rightarrow \infty}
		\frac{W\left(\frac{c^{1/(d-1)}}{d-1}\right)}{W\left(\frac{c^{1/d}}{d}\right)}
		=
		1.
	\end{equation}
	for any $c \geq 1$.
	By \cref{prop:sparse-grid-korobov}, if we choose $N$ such that $N = \mathcal{O}(\epsilon^{-\frac{1}{2}} \log^{\frac{3}{2}(d-1)}_2 (1/\epsilon))$, then \eqref{error-estimate} is satisfied. By \eqref{lamber}, we have
	\begin{equation}
		n  = \mathcal{O}\left ( d W \left ( \frac{\epsilon^{- \frac{1}{2d} }}{d} \log^{\frac{3(d-1)}{2d}}_2 \left ( \frac{1}{\epsilon} \right ) \right )  \right )
		= \Theta \left( d W \left ( \frac{\epsilon^{- \frac{1}{d} }}{d} \log^{\frac{3}{2}}_2 \left ( \frac{1}{\epsilon} \right ) \right ) \right).
	\end{equation}
	The last estimate above follows since
	\begin{equation}
		\lim_{d \rightarrow \infty}
		\frac{W \left ( \frac{\epsilon^{- \frac{1}{2d} }}{d} \log^{\frac{3(d-1)}{2d}}_2 \left ( \frac{1}{\epsilon} \right ) \right )}{W \left ( \frac{\epsilon^{- \frac{1}{d} }}{d} \log^{\frac{3}{2}}_2 \left ( \frac{1}{\epsilon} \right ) \right )}=1.
	\end{equation}
	The equation above expresses an upper bound on \( n \) as a function of \( d \). To emphasize this, we write \( n \) as \( n(d) \) in what follows.
	Having expressed \( n \) as a function of \( d \), we now simplify the expression \( n(d) + d \log_2 n(d) \) and argue that \( n(d) + d \log_2 n(d) = \mathcal{O}(n(d)) \). Since \( W(x) \leq x \) for \( x \geq 0 \), we first note that this yields a slightly looser, but simpler upper bound.
	\begin{equation}
		n(d) = \mathcal{O}\left ( \frac{d}{d} \epsilon^{- \frac{1}{d} } \log^{\frac{3}{2}}_2 \left ( \frac{1}{\epsilon} \right )  \right ) = \Theta ( \epsilon^{- \frac{1}{d} } ).
	\end{equation}
	The final estimate follows since \( \log_2^{3/2}(1/\epsilon) = \mathcal{O}(1) \) for fixed \( \epsilon \in (0,1) \). The claim that $n(d) + d \log_2 n(d) = \mathcal{O}(n(d))$ now follows from the following computation:
	\begin{equation}
		\lim_{d \to \infty} \frac{\epsilon^{- \frac{1}{d} } + d \log_2( \epsilon^{- \frac{1}{d} }) }{\epsilon^{- \frac{1}{d} }} = 
		1 + \lim_{d \to \infty} \frac{d \log_2( \epsilon^{- \frac{1}{d} }) }{\epsilon^{- \frac{1}{d} }} = 1 + \frac{\ln (1/\epsilon) }{ \ln 2} = \Theta(1).
	\end{equation}\label{depth-estimate}
	By \cref{prop:pqc-implementation-f-sparse}, the depth of $U_{f,\epsilon}(\boldsymbol x)$ is 
	\begin{align}
		\mathcal{O}\left( d 2^{n+d} n^{d-1}(n + d \log_2 n) \right) 
		&= \mathcal{O}\left( 2^d N dn(d) \right) \\
		&= \Theta \left(  
		\frac{d^2 \left(2 \log_2^{3/2}(1/\epsilon)\right)^d}{\epsilon^{1/2} \log_2^{3/2}(1/\epsilon)}
		W \left( \frac{\epsilon^{-1/d}}{d} \log_2^{3/2} \left( \frac{1}{\epsilon} \right) \right)
		\right).
	\end{align}
	Here we have used the estimates that
	\begin{align}
		2^d N  & =  \mathcal{O}(2^d \epsilon^{-\frac{1}{2}} \log^{\frac{3}{2}(d-1)}_2 (1/\epsilon)) =  \Theta \left( \frac{(2 \log^{\frac{3}{2}}_2(1/\epsilon))^d}{\epsilon^{\frac{1}{2}} \log^{\frac{3}{2}}_2 (1/\epsilon))}  \right) , \\
		dn(d) & =
		\mathcal{O} \left( d^2 W \left ( \frac{\epsilon^{- \frac{1}{d} }}{d} \log^{\frac{3}{2}}_2 \left ( \frac{1}{\epsilon} \right ) \right )\right).
	\end{align}
	Since the following limit can be verified by a straightforward computation,
	$$
	\lim_{d \to \infty} \frac{2d+n(d)+(d-1)\log_2n(d)}{2d+n(d)} = 1,
	$$
	the width of $U_{f,\epsilon}(\boldsymbol x)$ is $\mathcal{O}(2d + n(d) + (d-1) \log_2 n(d)) = \Theta(2d + n(d)) $. In particular, the width is
	\begin{align}
		\mathcal{O} \left( 2d + d W \left( \frac{\epsilon^{- \frac{1}{d}}}{d} \log_2^{\frac{3}{2}} \left( \frac{1}{\epsilon} \right) \right) \right).
	\end{align}
	This completes the proof.
\end{proof}


We now treat the case $p \notin \{2, \infty \}$ separately. This is because the optimal sparse grid decomposition (discussed in the previous section) is derived by solving the corresponding optimization problem with respect to the $L^p([0,1]^d)$ norms for $p \in \{2 , \infty\}$. 
Therefore, the sparse grid decomposition in \eqref{equation:sparse-grid} may not be the optimal decomposition when $p \notin \{2, \infty \}$. 
Since any function in $X^{2,p}([0,1]^d)$ when $p \notin \{2, \infty \}$ can still be written as 
\begin{equation}
	f(\boldsymbol x) = \sum_{\boldsymbol \ell} \sum_{\boldsymbol i \in {\boldsymbol I}_{\boldsymbol \ell}} v_{\boldsymbol \ell,\boldsymbol i} \phi_{\boldsymbol \ell,\boldsymbol i}(\boldsymbol x),
\end{equation}
we can still in practice use the decomposition discussed above to approximate any $f \in X^{2,p}([0,1]^d)$ when $p \notin \{2, \infty \}$ by $f_{n}^{s}(\boldsymbol{x})$.

\begin{cor}\label{cor:compplexity-estimate-2}
	Let $d \in \mathbb N$ and $\epsilon \in (0,1)$. For each $f \in X^{2,p}([0,1]^d)$ such that $2 < p < \infty$, there a exists a quantum circuit $U_{f,\epsilon} (\boldsymbol x)$ 
	(as in \cref{prop:approximation-error})
	such that if a measurement corresponding to the Pauli \( \sigma_z \) observable is made only on the first qubit, then
	\begin{equation}
		\bra{0} U_{f,\epsilon}^\dagger(\boldsymbol x) Z^{(1)} U_{f,\epsilon}(\boldsymbol x)  \ket{0} = f_n^s(\boldsymbol{x})
	\end{equation}
	such that we have
	\begin{equation}\label{error-est-2}
		\| f - f_n^s \|_{L^p([0,1]^d)} \leq \epsilon.
	\end{equation}
	The complexity of $U_{f,\epsilon}(\boldsymbol x)$ is characterized as follows:
	\begin{enumerate}
		\item[(i)] The depth of $U_{f,\epsilon} (\boldsymbol x)$ is \begin{equation}
			\mathcal{O} \left(  
			d^2 (12\beta \log_2 \beta)^{\beta} 
			\alpha^{\beta} \epsilon^{-\frac{p}{2p-1}}
			\log_2^{\beta} \left(\frac{1}{\epsilon}\right)
			W\left(\frac{(6\beta \log_2 \beta)^{\alpha} 
				\alpha^{\alpha} \epsilon^{-\frac{p}{d(2p-1)}}
				\log_2^{\alpha} \left(\frac{1}{\epsilon}\right)}{d}\right)
			\right).
		\end{equation}    
		\item[(ii)] The width of $U_{f,\epsilon} (\boldsymbol x)$ is 
		\begin{equation}
			\mathcal{O} \left( 2d +  dW\left(\frac{(6\beta \log_2 \beta)^{\alpha}  \alpha^{\alpha} \epsilon^{-\frac{p}{d(2p-1)}} \log_2^{\alpha} \left(\frac{1}{\epsilon}\right)}{d}\right) \right),
		\end{equation}
	\end{enumerate}
	where we have defined $\alpha=(3p-1)/(2p-1)$ and $\beta = \alpha (d-1)$. 
\end{cor}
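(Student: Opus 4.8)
The plan is to keep, unchanged, the sparse grid approximant $f_n^s$ of \eqref{sparse-approx-poly} and the quantum circuit $U_{f,\epsilon}(\boldsymbol x)$ furnished by \cref{prop:pqc-implementation-f-sparse}, and to replace only the approximation-error input: because \cref{bungartz2004sparse} and \cref{prop:sparse-grid-korobov} are stated for $p\in\{2,\infty\}$, I need a fresh bound on $\|f-f_n^s\|_{L^p([0,1]^d)}$ valid for $2<p<\infty$. The inclusion $X^{2,p}([0,1]^d)\subseteq H^1_0([0,1]^d)$ still ensures every $f\in X^{2,p}([0,1]^d)$ has the hierarchical expansion \eqref{equation:infinite-expan} with the $p$-independent coefficients \eqref{equation:coefficients-integral-formula} (the integration by parts behind \eqref{equation:coefficients-integral-formula} only requires $\partial^{2d}f/\partial x_1^2\cdots\partial x_d^2\in L^1$ and $f|_{\partial[0,1]^d}=0$), so $f_n^s$ is still the level-$n$ sparse truncation. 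Once $\|f-f_n^s\|_{L^p}$ is controlled in terms of $n$ (equivalently the grid count $N$), the depth and width will follow by exactly the Lambert-$W$ inversion and asymptotic simplifications of the proof of \cref{prop:approximation-error}.

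The key new step is a coefficient bound for general $p$. Writing $|f|_{\boldsymbol 2,p}:=\|\partial^{2d}f/\partial x_1^2\cdots\partial x_d^2\|_{L^p([0,1]^d)}$, H\"older's inequality applied to \eqref{equation:coefficients-integral-formula} with conjugate exponent $p'=p/(p-1)$ gives
\[
|v_{\boldsymbol\ell,\boldsymbol i}|\leq\left\|\prod_{j=1}^d\left(2^{-(\ell_j+1)}\phi_{\ell_j,i_j}\right)\right\|_{L^{p'}([0,1]^d)}|f|_{\boldsymbol 2,p}.
\]
Since $\|\phi_{\ell,i}\|_{L^{p'}([0,1])}^{p'}=2^{-\ell}\cdot\tfrac{2}{p'+1}$, tensorizing over $j$ shows this prefactor is at most $2^{-\frac{2p-1}{p}\|\boldsymbol\ell\|_1}$, using $1+\tfrac{1}{p'}=\tfrac{2p-1}{p}$. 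For a fixed level $\boldsymbol\ell$ the hat functions $\{\phi_{\boldsymbol\ell,\boldsymbol i}\}_{\boldsymbol i\in I_{\boldsymbol\ell}}$ have pairwise essentially disjoint supports, so the level-$\boldsymbol\ell$ contribution $g_{\boldsymbol\ell}:=\sum_{\boldsymbol i\in I_{\boldsymbol\ell}}v_{\boldsymbol\ell,\boldsymbol i}\phi_{\boldsymbol\ell,\boldsymbol i}$ satisfies $\|g_{\boldsymbol\ell}\|_{L^p}^p=\sum_{\boldsymbol i\in I_{\boldsymbol\ell}}|v_{\boldsymbol\ell,\boldsymbol i}|^p\|\phi_{\boldsymbol\ell,\boldsymbol i}\|_{L^p}^p$; with $|I_{\boldsymbol\ell}|=2^{-d}2^{\|\boldsymbol\ell\|_1}$ and $\|\phi_{\boldsymbol\ell,\boldsymbol i}\|_{L^p([0,1]^d)}^p=(\tfrac{2}{p+1})^d2^{-\|\boldsymbol\ell\|_1}$, the powers of $2^{\|\boldsymbol\ell\|_1}$ combine with the coefficient bound to leave $\|g_{\boldsymbol\ell}\|_{L^p([0,1]^d)}\leq 2^{-\frac{2p-1}{p}\|\boldsymbol\ell\|_1}|f|_{\boldsymbol 2,p}$ (the residual $d$-dependent constant being $\leq 1$).

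Summing over the discarded levels with the triangle inequality, using that the number of $\boldsymbol\ell\geq\boldsymbol 1$ with $\|\boldsymbol\ell\|_1=m$ equals $\binom{m-1}{d-1}$, the elementary bound $\binom{m-1}{d-1}\leq 2^{m-(n+d)}\binom{n+d-1}{d-1}$ for $m\geq n+d$ (valid once $n\gtrsim d$), and $\sum_{k\geq0}(2^{1/p-1})^k<\infty$ for $p>1$, one obtains
\[
\|f-f_n^s\|_{L^p([0,1]^d)}\leq|f|_{\boldsymbol 2,p}\sum_{m\geq n+d}\binom{m-1}{d-1}2^{-\frac{2p-1}{p}m}\lesssim|f|_{\boldsymbol 2,p}\binom{n+d-1}{d-1}2^{-\frac{2p-1}{p}n},
\]
with implied constant depending only on $p$. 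Imposing that the right side be $\leq\epsilon$, bounding $\binom{n+d-1}{d-1}$ by Stirling, and using $N\asymp 2^n\binom{n+d-1}{d-1}$ from \cref{prop:sparse-grid-korobov}(i), this demand is met — self-consistently, since $n=\Theta(\log_2(1/\epsilon))$ to leading order — by choosing
\[
N=\mathcal{O}\!\left(\big((6\beta\log_2\beta)^{\alpha}\alpha^{\alpha}\big)^{d}\,\epsilon^{-\frac{p}{2p-1}}\,\log_2^{\alpha(d-1)}\!\left(\tfrac{1}{\epsilon}\right)\right),\qquad\alpha=\tfrac{3p-1}{2p-1},\quad\beta=\alpha(d-1),
\]
where the $\epsilon$-exponent is $-1/q$ and the log-exponent is $(q+1)(d-1)/q=\alpha(d-1)$ with $q=\tfrac{2p-1}{p}$, and the per-dimension constant collects the Stirling and geometric factors. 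From here I repeat the Lambert-$W$ inversion of \cref{prop:approximation-error} to write $n=n(d)=\Theta\!\left(dW(N^{1/d}/d)\right)$, substitute $n(d)$ and $N$ into the depth $\mathcal{O}(d2^{n+d}n^{d-1}(n+d\log_2 n))=\mathcal{O}(2^dN\,dn(d))$ and width $\mathcal{O}(2d+n+(d-1)\log_2 n)=\Theta(2d+n(d))$ of \cref{prop:pqc-implementation-f-sparse}, and simplify $n(d)+d\log_2 n(d)=\Theta(n(d))$ and $2^dN=\mathcal{O}\!\left((12\beta\log_2\beta)^{\beta}\alpha^{\beta}\epsilon^{-p/(2p-1)}\log_2^{\beta}(1/\epsilon)\right)$ exactly as before, yielding the two displayed bounds.

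The hard part is the $L^p$ error analysis for $p\notin\{2,\infty\}$ of the now-suboptimal sparse grid: the H\"older estimate, the disjoint-support $L^p$ identity, and the Stirling/geometric tail bound must all be carried out while tracking the joint dependence on $p$ and $d$ finely enough that the accumulated constants assemble into the exact closed forms $(12\beta\log_2\beta)^{\beta}\alpha^{\beta}$ and $(6\beta\log_2\beta)^{\alpha}\alpha^{\alpha}$ in the statement; everything else — the circuit itself and the Lambert-$W$ and asymptotic bookkeeping — is essentially verbatim from \cref{prop:pqc-implementation-f-sparse} and \cref{prop:approximation-error}.
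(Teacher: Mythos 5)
Your proposal is correct in substance and reaches the same complexity bounds by the same circuit-theoretic machinery, but it replaces the paper's key external input with a self-contained argument. The paper's proof simply quotes the $L^p$ error estimate $\| f - f^s_n \|_{L^p} \leq (\log_2 N)^{(3-1/p)(d-1)} N^{-(2-1/p)}$ from \cite[Equation 6.3]{mao2022approximation} and the corresponding choice of $N$ from \cite[p.~8]{fang2025korobovCNN}, then runs the Lambert-$W$ inversion and the $2^d N\, d\, n(d)$ depth bookkeeping verbatim from \cref{prop:approximation-error}. You instead derive the error bound from first principles: H\"older on \eqref{equation:coefficients-integral-formula} to get $|v_{\boldsymbol\ell,\boldsymbol i}|\lesssim 2^{-\frac{2p-1}{p}\|\boldsymbol\ell\|_1}|f|_{\boldsymbol 2,p}$, the disjoint-support identity for $\|g_{\boldsymbol\ell}\|_{L^p}$, and a geometric tail sum over discarded levels, arriving at $\binom{n+d-1}{d-1}2^{-\frac{2p-1}{p}n}|f|_{\boldsymbol 2,p}$ --- which is asymptotically the same rate as the cited bound (both reduce to $n^{d-1}2^{-(2-1/p)n}$ under $N\asymp 2^n n^{d-1}$). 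What your route buys is a proof that does not depend on two external references and makes the $p$-dependence of the constants transparent; what the paper's route buys is brevity and a guarantee of landing on the exact closed forms $(6\beta\log_2\beta)^{\beta}\alpha^{\beta}$ and $(12\beta\log_2\beta)^{\beta}\alpha^{\beta}$ appearing in the statement. One small correction: your choice of $N$ carries the prefactor $\bigl((6\beta\log_2\beta)^{\alpha}\alpha^{\alpha}\bigr)^{d}$, i.e.\ exponent $\alpha d$, whereas matching the stated bounds requires exponent $\beta=\alpha(d-1)$, i.e.\ $\bigl((6\beta\log_2\beta)^{\alpha}\alpha^{\alpha}\bigr)^{d-1}$; this off-by-one in the $d$-dependent constant should be fixed (or the Stirling bookkeeping carried out explicitly) before the final substitution, though it does not affect the $\epsilon$-scaling or the structure of the argument.
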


\begin{proof}
	A computation in \cite[Equation 6.3]{mao2022approximation} shows that
	\begin{equation}
		\| f - f^s_n \|_{L^p([0,1]^d)} 
		\leq (\log_2 N)^{\left( 3 - \frac{1}{p} \right) (d-1)} N^{-\left(2 - \frac{1}{p}\right)}.
	\end{equation}
	Following the derivation presented in \cite[Page 8, below (3.2)]{fang2025korobovCNN}, to guarantee an accuracy $\epsilon$ in \eqref{error-est-2} we need to choose $N$ such that
	\begin{equation}
		N = \left\lceil 
		(6\beta \log_2 \beta)^{\beta} 
		\alpha^{\beta} \epsilon^{-\frac{p}{2p-1}}
		\log_2^{\beta} \left(\frac{1}{\epsilon}\right)
		\right\rceil,
	\end{equation}
	where
	$\alpha=(3p-1)/(2p-1)$ and $\beta = \alpha (d-1)$. 
	The unitary operator is constructed as in \cref{prop:approximation-error}. We now derive the complexity of the quantum circuit in this case. Using the definition of Lambert's $W$ function and \eqref{lamber}, it suffices to choose 
	\begin{align}
		n=\mathcal{O}\left(dW\left(\frac{N^{\frac{1}{d-1}}}{d}\right)\right)&=
		\mathcal{O}\left(dW\left(\frac{(6\beta \log_2 \beta)^{\alpha} 
			\alpha^{\alpha} \epsilon^{-\frac{p}{(d-1)(2p-1)}}
			\log_2^{\alpha} \left(\frac{1}{\epsilon}\right)}{d}\right)\right), \\
		&=\Theta\left(dW\left(\frac{(6\beta \log_2 \beta)^{\alpha} 
			\alpha^{\alpha} \epsilon^{-\frac{p}{d(2p-1)}}
			\log_2^{\alpha} \left(\frac{1}{\epsilon}\right)}{d}\right)\right).
	\end{align}
	The equation above provides an upper bound on \( n \) as a function of \( d \). To highlight this relationship, we denote it as \( n(d) \) as before.
	Once again, we have \( n(d) + d \log_2 n(d) = \mathcal{O}(n(d)) \). To see this, note that since \( W(x) \leq x \) for \( x \geq 0 \), we obtain a slightly looser bound on \( n \) given by
	\begin{equation}
		n=
		\mathcal{O}\left(\frac{d}{d}(6\beta \log_2 \beta)^{\alpha} 
		\alpha^{\alpha} \epsilon^{-\frac{p}{d(2p-1)}}
		\log_2^{\alpha} \left(\frac{1}{\epsilon}\right)\right)
		=\Theta(( d \log_2 (\alpha d))^{\alpha} \epsilon^{-\frac{p}{d(2p-1)}}).
	\end{equation}
	The final estimate follows since $\alpha^\alpha=\mathcal{O}(1)$ and $\log^\alpha_2(1/\epsilon)=\mathcal{O}(1)$ for fixed $\epsilon \in (0,1)$ and $2 \leq p \leq \infty$. The claim that $n(d) + d \log_2 n(d) = \mathcal{O}(n(d))$ now follows from the following computation:
	\begin{equation}
		\lim_{d \to \infty} \frac{n(d) + d \log_2 n(d)}{n(d)}
		= 
		1 + \lim_{d \to \infty} \frac{d \log_2( ( d \log_2 (\alpha d))^{\alpha} \epsilon^{-\frac{p}{d(2p-1)}}) }{( d \log_2 (\alpha d))^{\alpha} \epsilon^{-\frac{p}{d(2p-1)}}} = \Theta(1).
	\end{equation}
	Since the depth of \( U_{f,\epsilon}(\boldsymbol{x}) \) is \( \mathcal{O}\left( 2^d N d n(d) \right) \), an upper bound on the depth of \( U_{f,\epsilon}(\boldsymbol{x}) \) is therefore given by
	\begin{align}
		\Theta \left(  
		2^d d^2 (6\beta \log_2 \beta)^{\beta} 
		\alpha^{\beta} \epsilon^{-\frac{p}{2p-1}}
		\log_2^{\beta} \left(\frac{1}{\epsilon}\right)
		W\left(\frac{(6\beta \log_2 \beta)^{\alpha} 
			\alpha^{\alpha} \epsilon^{-\frac{p}{d(2p-1)}}
			\log_2^{\alpha} \left(\frac{1}{\epsilon}\right)}{d}\right)
		\right).
	\end{align}
	Since \( 2^d = \mathcal{O}(2^{\alpha(d-1)}) = \mathcal{O}(2^{\beta})  \), we can absorb the factor \( 2^d \) into \( (6\beta \log_2 \beta)^{\beta} \) by replacing \( 2^d \)  by \( 2^\beta \), and instead use the following estimate:
	\begin{align}
		\Theta \left(  
		d^2 (12\beta \log_2 \beta)^{\beta} 
		\alpha^{\beta} \epsilon^{-\frac{p}{2p-1}}
		\log_2^{\beta} \left(\frac{1}{\epsilon}\right)
		W\left(\frac{(6\beta \log_2 \beta)^{\alpha} 
			\alpha^{\alpha} \epsilon^{-\frac{p}{d(2p-1)}}
			\log_2^{\alpha} \left(\frac{1}{\epsilon}\right)}{d}\right)
		\right).
	\end{align}
	Since the width of $U_{f,\varepsilon}$ is $\mathcal{O}(2d + n(d) + (d-1) \log_2 n(d)) = \Theta(2d + n(d))$, an upper bound on the width of \( U_{f,\epsilon}(\boldsymbol{x}) \) is therefore given by
	\begin{equation}
		\mathcal{O}
		\left(
		2d + 
		dW\left(\frac{(6\beta \log_2 \beta)^{\alpha} 
			\alpha^{\alpha} \epsilon^{-\frac{p}{d(2p-1)}}
			\log_2^{\alpha} \left(\frac{1}{\epsilon}\right)}{d}\right)
		\right).
	\end{equation}
	This completes the proof.
\end{proof}




\section{Conclusion}\label{future}
By leveraging the QSP and LCU algorithms, we have constructed quantum circuits capable of approximating functions in the Korobov function space. This work establishes a rigorous theoretical foundation for the quantum implementation of a broad class of multivariate functions. The main contributions include precise estimates of the approximation error and a detailed complexity analysis of the proposed quantum circuits. These results advance the emerging field of \emph{quantum neural network approximation theory} for parameterized quantum circuits and contribute to research at the interface of quantum computing and scientific computing.

We emphasize that the central aim of \emph{quantum neural network approximation theory} is to investigate the fundamental capabilities and limitations of (parameterized) quantum circuits in approximating functions from various function spaces, independent of the specifics of training algorithms or data availability. This theoretical framework captures only a subset of the broader practical challenge. A complete analysis of the utility of (parameterized) quantum circuits in real applications must also consider additional factors, such as quantum state initialization, the quantity and quality of training data in learning-based frameworks, the optimization of circuit parameters, and the fidelity of measurement and readout. We anticipate that future research will address these aspects to deepen understanding of quantum-enhanced function approximation.

Several concrete future directions include:

\begin{enumerate}
    \item[(i)] \textbf{Different Function Spaces:} Investigate whether (parameterized) quantum circuits can approximate functions in other function spaces, such as Besov spaces \cite{liu2021besov}, Bochner spaces \cite{abdeljawad2022approximations}, and Fr\'echet spaces \cite{benth2023neural}.
    
    \item[(ii)] \textbf{Optimal Sparse Grid Decomposition:} For $p \notin \{2, \infty\}$, the optimal sparse-grid decomposition for the $L^p$ norm has not been computed. Determining it may yield better complexity estimates than those in \cref{cor:compplexity-estimate-2}.
    
    \item[(iii)] \textbf{Other Basis Functions:} The sparse grid decomposition is not limited to the hierarchical basis used here. Extensions to other multiscale bases, such as wavelets, could allow quantum circuit approximations of functions in diverse spaces.
    
    \item[(iv)] \textbf{Higher-Order Korobov Spaces:} Our analysis focuses on \(X^{r,p}([0,1]^d)\) with \(r=2\). Extensions to \(r \ge 3\) are feasible. \cite[Section 4.2]{bungartz2004sparse} uses higher-degree polynomial bases, including Lagrange polynomials, for \(r \ge 4\). \cite{li2025higherorderapproximationrates} applies this to two-dimensional CNNs. Lagrange polynomials can be expanded in Chebyshev polynomials when using Chebyshev nodes, so results in \cref{cheb-qc} are applicable, with the caveat that a non-uniform sparse grid is required. Error analysis for this method remains an open problem.
    
    \item[(v)] \textbf{State Preparation:} We used a simple argument to construct a unitary for state preparation (\cref{equation:state-preperation}). Leveraging results from quantum state preparation literature may improve the associated complexity estimate.
    
    \item[(vi)] \textbf{Implementing LCU:} The primary overhead in our complexity estimate arises from the controlled unitary in the LCU algorithm. Exploring quantum circuit depth minimization techniques could reduce this cost.
    
    \item[(vii)] \textbf{Variants of LCU:} Several LCU variants have been proposed. For example, \cite{chakraborty2024implementing} introduces methods suitable for intermediate-term quantum computers. Investigating such variants could enable new quantum circuits for approximating $d$-dimensional functions across various function spaces.
\end{enumerate}


\printbibliography
\end{document}